\documentclass[reqno]{amsart}

\usepackage[margin=3cm]{geometry}

\usepackage{arxiv_style}
\usepackage{fontawesome}

\makeatletter
\def\blfootnote{\gdef\@thefnmark{}\@footnotetext}
\makeatother

\title[\IOT: A Client-optimal Oblivious Transfer Protocol for IoT Devices]{\IOT: A Client-optimal Oblivious Transfer Protocol for IoT Devices}

\author[E.\ Onofri, A.\ Ciccotelli, and R.\ Di Pietro]{}

\begin{document}

\blfootnote{$^{\star}$ Corresponding Author, (\href{mailto:elia.onofri@kaust.edu.sa}{\faEnvelopeO}) \texttt{elia.onofri@kaust.edu.sa}}

\maketitle

\vspace{-1em}

\begin{center}
    \begin{minipage}{.89\linewidth}\centering
        \textsc{Elia Onofri}$^{\,\star,\, \orcidlink{0000-0001-8391-2563}}$,
        \textsc{Andrea Ciccotelli}$^{\orcidlink{0009-0004-8286-5135}}$,
        \textsc{Roberto Di Pietro}$^{\, \orcidlink{0000-0003-1909-0336}}$
        \\
        \bigskip
        \begin{minipage}{.90\linewidth}\centering
            \footnotesize
            King Abdullah University of Science and Technology (KAUST)\\
            Computer, Electrical and Mathematical Sciences and Engineering (CEMSE) Division,\\
            Thuwal 23955, Saudi Arabia
        \end{minipage}
    \end{minipage}
\end{center}

\medskip
\thispagestyle{empty}

\begin{abstract}
    Oblivious Transfer (OT) is a fundamental cryptographic primitive enabling privacy-preserving computation and constitutes a core building block for secure multi-party computation while supporting a wide range of security-sensitive applications: private information retrieval, zero-knowledge proofs, and password-authenticated key exchange, to cite a few.
    While recent advances in OT extension have significantly reduced amortised costs, their reliance on batches of random base OTs and substantial pre-computation phases limits their practicality in scenarios where the number of transfers is modest or where communication latency and client-side computation are critical constraints.
    In such settings, efficient base OT protocols remain both relevant and necessary.
    In this work, we introduce \IOT, a novel base 1-out-of-2 OT protocol grounded in the quadratic residuosity problem, specifically designed to minimise receiver-side computation and interaction.
    Our construction is particularly appealing on client--server architectures in which the receiver operates on low-power hardware, such as Internet of Things (IoT) devices.
    Through a lightweight offline pre-computation phase, \IOT shifts the on-transfer computational burden almost entirely to the Sender, while reducing online communication to only six messages and four digests exchanged.
    We provide a detailed description of the protocol, accompanied by a formal proof of its security.
    Moreover, to demonstrate the viability of \IOT, we also present an open-source proof-of-concept implementation (in C language) evaluated on real IoT hardware.
    Results are staggering: for 128-bit security using a 3072-bit RSA modulus, the receiver incurs an average online cost per OT as low as 2.80~\textmu{}s on desktop platforms and 39.90~\textmu{}s on IoT devices, more than 10$\times$ faster than the well known \textit{SimplestOT}.
    These results show that \IOT effectively bridges the gap between theoretical security and practical efficiency, enabling OT deployment in real-world resource-constrained environments.

    \medskip

    \noindent{\bf Keywords:}
    Oblivious Transfer \sep
    Internet of Things \sep
    Quadratic Residue \sep
    Resource-Constrained Devices \sep
    Privacy \sep
    Security.


\end{abstract}

\begin{multicols}{2}
    \tableofcontents
\end{multicols}



\section{Introduction}\label{sec:intro}

The security of modern cryptographic systems, particularly in distributed environments, relies on the ability to perform computations while preserving data privacy.
One of the fundamental cryptographic primitives enabling such privacy-preserving computations is Oblivious Transfer (OT).
Originally introduced by Rabin in 1981~\cite{Rabin1981} as a technique to probabilistically transmit encrypted messages, OT protocols quickly became a fundamental building block of secure multi-party computation (MPC), allowing a Sender to transfer one of multiple possible messages to a Receiver without revealing any additional information.
Specifically, in its simplest form of 1-out-of-2 OT (\OT), the Sender possesses two messages ($M_0, M_1$) and the Receiver can select one of the two ($b\in\{0, 1\}$); \OT then ensures that the Receiver learns $M_b$ while remaining oblivious to $M_{1-b}$ and the Sender learns nothing at all about the choice $b$ (see Figure~\ref{fig:ot-protocol}).

Nowadays, OT serves as a critical building block in numerous cryptographic protocols, including private information retrieval (PIR), secure voting, password-authenticated key exchange (PAKE), and zero-knowledge proofs, to name a few. Its ability to enable selective data transfer while ensuring privacy for both parties makes it essential for security-sensitive applications \cite{rajagopalan2024zero,10.1145/3340301.3341128,10292713,10124397,Ghaleb24}.

Despite its theoretical significance, OT also presents notable computational and efficiency challenges in practical applications.
Traditional secure OT schemes often rely on hardness assumptions from number theory, like the Diffie-Hellman assumptions or the lattice-based hardness assumptions, to guarantee security \cite{Rabin1981,tzeng2004efficient,naor2001efficient,10.1007/978-3-030-45727-3_23,asharov2015malicious}. While these cryptographic foundations provide strong security guarantees, they frequently suffer from high computational and communication overhead, making them impractical for large-scale deployments or resource-constrained environments.

Although OT extension schemes~\cite{beaver96,ishai2003extending} appear to render further work on base OT obsolete, they rely fundamentally on a batch of random base OTs and require a substantial pre-computation phase before any extension can be performed. As a consequence, research has largely concentrated on reducing the number of random base OTs, rather than optimising the classical 1-out-of-2 OT itself. In applications where the total number of OTs remains modest, the cost of this initial setup cannot be effectively amortised: on commonly found implementations, random base OTs typically become advantageous on the order of tens to hundreds of thousands invocations, and deriving a standard \OT from a random OT introduces additional computational overhead. For these scenarios, focusing on efficient base OT still remains both relevant and necessary.

Additionally, some existing OT schemes require an initial message exchange phase before data transfer can begin, further increasing latency and inefficiency. This is \eg the case of the state-of-the-art solution for OT extension \cite{canetti2020blazing}, which requires a 3-round interaction for its base random OT. This issue is particularly problematic in client-server architectures where the client operates on low-power hardware, such as Internet of Things (IoT) devices, and where communication channels are unstable or present sensible throttling. Optimising OT protocols to reduce both communication complexity and client-side computation is therefore crucial to expanding their applicability in real-world scenarios.

To fill the above-highlighted gaps, we propose a novel protocol under the name of \IOT, focusing on a highly efficient transfer and Receiver design capable of providing security even in the presence of active adversaries.


\subsection{Technical Contributions}

In this work, we propose a novel OT protocol designed to achieve high computational Receiver efficiency at a reduced communication cost (only two rounds of interaction are needed) while preserving security guarantees.
The solution is particularly suited for client-server architectures, where the server operates on standard hardware while the client is a resource-constrained IoT device.
By minimising client-side operations and further optimising them under a savvy pre-computation phase, we enable OT to be deployed in scenarios where traditional schemes would be computationally prohibitive.

The proposed protocol is grounded in the ring $\mathbb{Z}_n$, where $n=p\cdot q$ as usual, and $p\equiv_4 q \equiv_4 1$ is required. In particular, it leverages: (i) pseudo-randomness generated from a Key Derivation Function (KDF) $F$ to ensure message confidentiality; (ii) the quadratic residuosity problem to enforce security guarantees; and, (iii) a cryptographic hash function $H(\cdot)$ to disambiguate roots in $\ZZ_n$.

\begin{figure}[t]
    \centering
    \begin{tikzpicture}
        \node[label={below:Sender}] (S) at (0, 0.2) {\includegraphics[width=1cm]{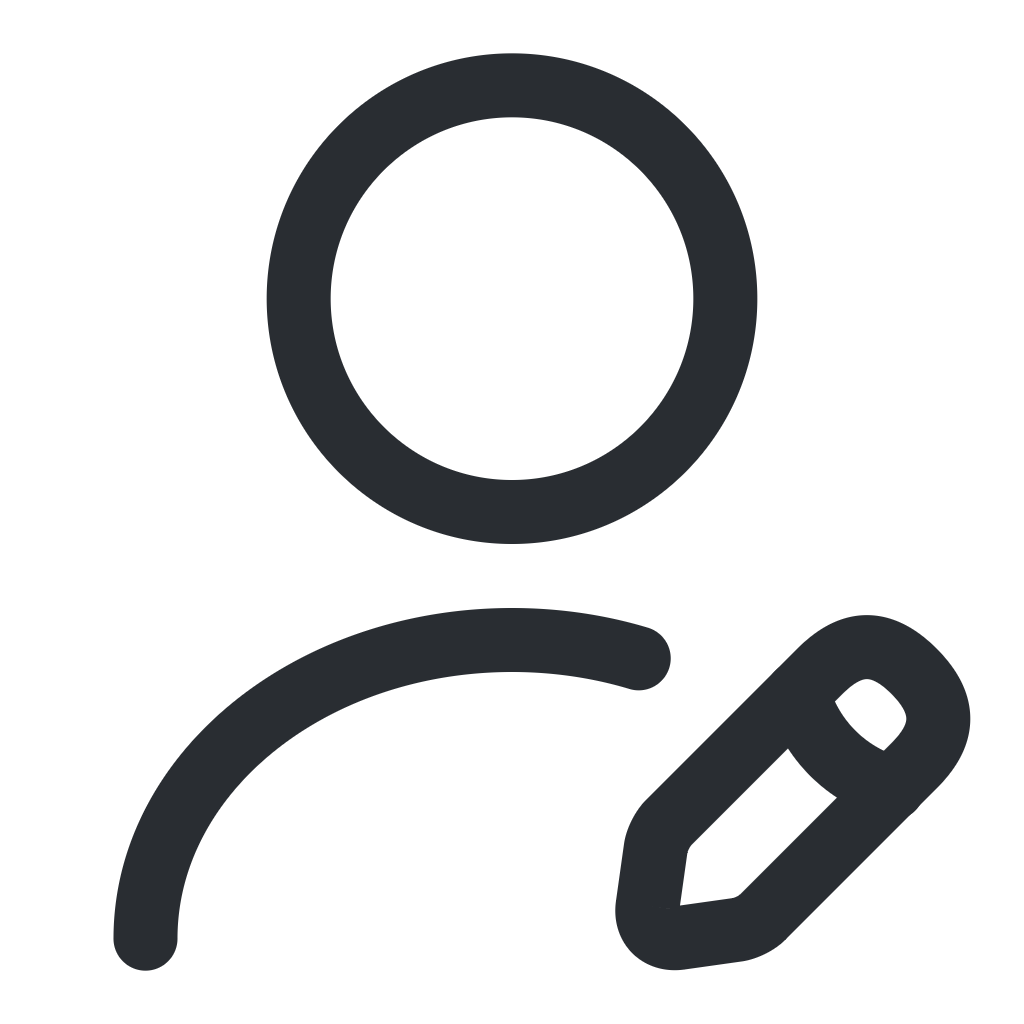}};
        \node[label={below:Receiver}] (R) at (7, 0.2) {\includegraphics[width=1cm]{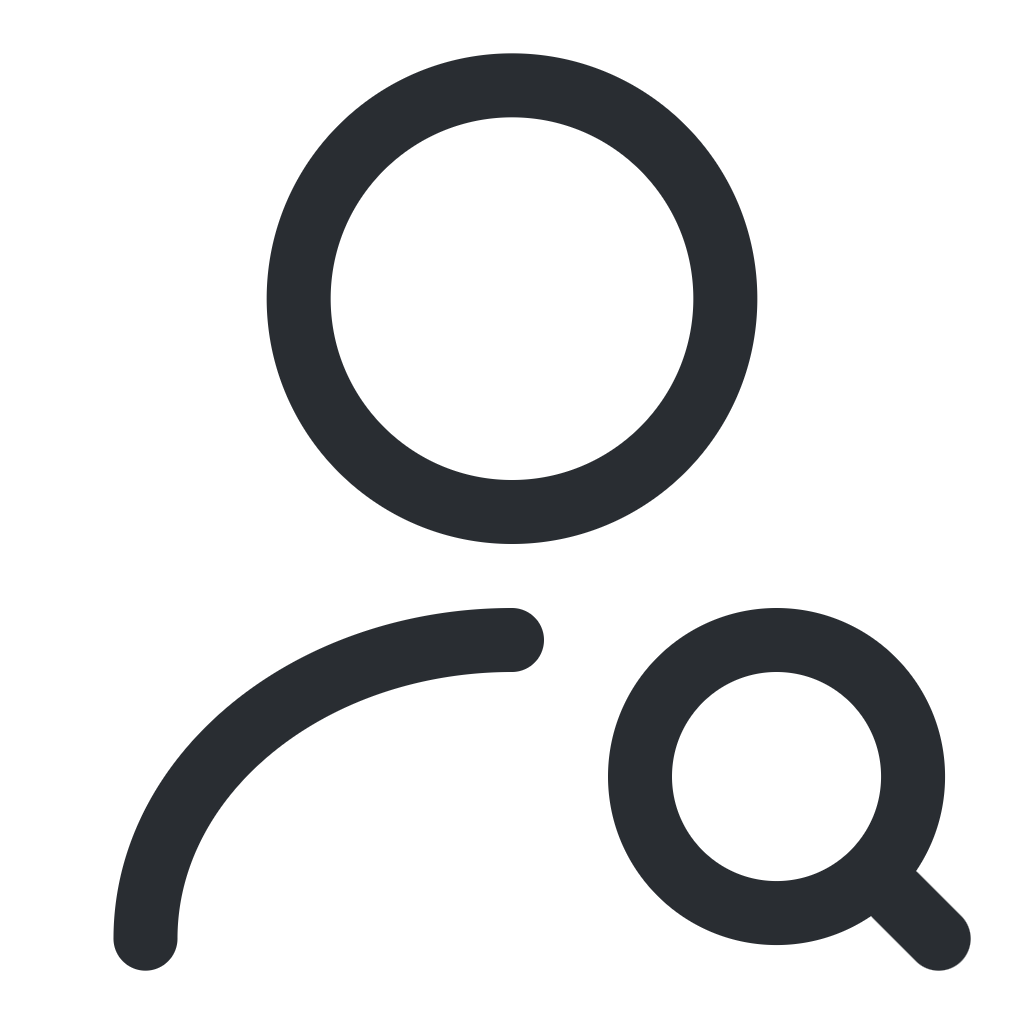}};
        \node[draw=graydraw, fill=black!10, ultra thick, rounded corners, minimum width=1.6cm, minimum height=1.5cm] (OT) at (3.5, 0) {OT};

        \draw[latex-, thick, color=graydraw] ([yshift=.3cm]OT.west) --node[above, black] {\scriptsize $M_0, M_1$} ++(-1.8, 0);
        \draw[-latex, thick, color=graydraw] ([yshift=-.3cm]OT.west) --node[below, black] {\scriptsize $\bot$} ++(-1.8, 0);

        \draw[latex-, thick, color=graydraw] ([yshift=.3cm]OT.east) --node[above, black] {\scriptsize $b \in \{0, 1\}$} ++(1.8, 0);
        \draw[-latex, thick, color=graydraw] ([yshift=-.3cm]OT.east) --node[below, black] {\scriptsize $M_b$} ++(1.8, 0);
    \end{tikzpicture}
    \caption{Abstraction of a Base OT protocol.}
    \label{fig:ot-protocol}
\end{figure}

In detail, \IOT introduces several technical contributions, synthesised in the following:
\begin{description}
	\item[Minimal message exchanges:] The proposed solution eliminates the need for an initial message exchange from the Sender to the Receiver. The Receiver sends a single ring element (masked key) to the Sender, which then responds with a nonce for the KDF (say one ring element to ease exposition), 4 ciphertexts, and four key digests (typically 256 bits). This results in a total of two ring elements, four messages, and four digests, optimising communication overhead.

	\item[Security against a malicious Sender:] Different from other OT proposals, which are designed in the context of honest-but-curious senders, our transfers are secure by design against malicious behaviours.
	Malicious senders can still poison the protocol during setup by modifying the required congruencies on $p$ and $q$; however, we define a probabilistic game that any Receiver can run to validate the modulus trustworthiness. A full probabilistic analysis of said game is also provided and the game framework itself can be extended to different hypothesis testing.

	\item[Optimal Receiver-side overhead:] By requiring only one random ring element sampling, a single call to the Hash function and the KDF, one modular squaring, one message xor, and at most one subtraction, the proposed scheme is, to the best of our knowledge, the most efficient classical OT scheme Receiver-side.
    Furthermore, all the operations, but for the XOR and the KDF evaluation, can be precomputed offline, making it extremely competitive in the online transfer phase despite the primitive computational impact, and hence particularly suitable in the context of IoT.

    \item[Open source code and testing] We realised an open source implementation\footnote{Online available at \url{https://github.com/eOnofri04/IOT2}.} (proof-of-concept) in C language, running the scheme on a single machine to assess its performance. In particular, we benchmark it against both a desktop personal computer and an IoT Raspberry Pi Zero 2W, achieving, to the best of our knowledge, the lowest online computational burden on the Receiver side (more than 10$\times$ faster than the online execution of \cite{CO-OT15}, often used as a benchmark in the literature).
    We also release a python multi-machine implementation to show its feasibility in real contexts.
\end{description}

By combining these innovations, our protocol achieves a lightweight, efficient, and secure OT mechanism, making it particularly well-suited for IoT applications.


\subsection{Paper Organisation}

The remainder of this manuscript is organised as follows.
In Section~\ref{sec:related-work}, we present the OT context and the related literature.
A short introduction of the notation and the mathematical background follows in Section \ref{sec:notation}.
Within Section~\ref{sec:IOT2}, we present our novel approach to OT, or \IOT, formally discussing the system model and its corresponding expected security (Section~\ref{ssec:system-model}), the various threat models (Section~\ref{ssec:threat-models}), the actual scheme (Section~\ref{ssec:scheme}), and the computational complexity of the transfer itself (Section~\ref{ssec:complexity-analysis}).
In Section~\ref{sec:security-analysis}, we then proceed to present the security under the proposed threat models (Section~\ref{ssec:securityAnalysis}, see Appendix~\ref{app:security-games} for a formal definition of the security games involved), which also covers the setting of untrusted senders, as detailed in Section~\ref{ssec:untrustedServer}, where we present an innovative game to assess sender trustworthiness (a detailed discussion on the game's success probability can be found in Appendix~\ref{app:untrusted-sender}).
In Section~\ref{sec:experiments}, we explore the details of our experimental campaign, presenting the settings of our proof-of-concept implementation (Section~\ref{ssec:experimental-setting}) and discussing the corresponding results (Section~\ref{ssec:experimental-results}).
Finally, Section~\ref{sec:conclusions} concludes the paper, summarising the contribution and discussing future work.


\section{Related Work}
\label{sec:related-work}

\begin{table*}[p]

    \def\implnotice{%
        \multicolumn{7}{c}{%
            \raisebox{.1em}{%
                \scalebox{.6}[.8]{%
                    \color{gray}%
                    \rule[0.2em]{4.75cm}{0.5pt}%
                    \phantom{xx}%
                    Not addressed by authors.%
                    \phantom{xx}%
                    \rule[0.2em]{4.75cm}{0.5pt}%
                }%
            }%
        }%
    }

    \def\notprovidednotice{%
        \multicolumn{3}{c|}{%
            \raisebox{.1em}{%
                \scalebox{.6}[.8]{%
                    \color{gray}%
                    \rule[0.2em]{.95cm}{0.5pt}%
                    \phantom{x} Not addressed by authors.%
                    \phantom{x}%
                    \rule[0.2em]{.95cm}{0.5pt}%
                }%
            }%
        }%
    }

    \def\notprovidednoticeshort{%
        \raisebox{.1em}{%
            \scalebox{.6}[.8]{%
                \color{gray}%
                Not addr.%
            }%
        }%
    }

    \centering

    \caption{
        Summary of the structural features and structural statistics of 20 representative OT protocols from the literature, here compared against the proposed \IOT protocol.
    }
    \label{tab:comparison_table}

    \setlength{\tabcolsep}{5pt}
    \def\arraystretch{1.2}
    \rotatebox{90}{\scalebox{.8}{
        \begin{tabular}{cc|ccc|ccc|ccccc|ccc}
        \toprule
        \rowRot{ \textbf{Year} } &

        \rowRot{ \textbf{Protocol} } &
        \rowRot{ \makecell[l]{Choosable inputs} } &
        \rowRot{ \makecell[l]{Communication\\Transfer Rounds} } &

        \rowRot{ \makecell[l]{Resilience to\\malicious adv.} } &
        \rowRot{ \makecell[l]{No Trusted Third \\ Party required} } &
        \rowRot{ \makecell[l]{Cryptographic\phantom{x}\\primitives$^\P$} }

        \rowRot{ \makecell[l]{Proof-of-concept \\ implementation$^\ddagger$} } &
        \rowRot{ \makecell[l]{Transfer global\\exec.\ time [\textmu s]} } &
        \rowRot{ \makecell[l]{Sender execution\\time [\textmu s]} } &
        \rowRot{ \makecell[l]{Receiver\\online exec.\ [\textmu s]} } &
        \rowRot{ \makecell[l]{Receiver (IoT)\\online exec.\ [\textmu s]} } &

        \rowRot{ \makecell[l]{Allows for Receiver\\pre-computation$^\star$} } &
        \rowRot{ Open Source } &
        \rowRot{ Language } \\

        \midrule

        1985 & \cite{Even1985} & \chkmark & 3 & \crossmark & \chkmark & PKE & \crossmark & \implnotice \\
        1989 & \cite{bellare1989non} & \chkmark & 1 & \crossmark & \chkmark & CDH & \crossmark & \implnotice \\
        1999 & \cite{rivest1999unconditionally} & \chkmark & 2 & \chkmark & \crossmark & \makecell[c]{Trusted initializer} & \crossmark & \implnotice \\
        2004 & \cite{tzeng2004efficient} & \chkmark & 2 & \chkmark & \chkmark & DDH & \crossmark & \implnotice \\
        2005 & \cite{crepeau2005efficient} & \crossmark & 2 & \chkmark & \crossmark & \makecell[c]{Noisy channel} & \crossmark & \implnotice \\
        2008 & \cite{peikert2008framework} & \chkmark & 2 & \chkmark & \chkmark & PKE (DDH/LWE) & \crossmark & \implnotice \\
        2011 & \cite{ishai2011constant} & \chkmark & 2 & \chkmark & \chkmark & ROM & \crossmark & \implnotice \\
        2012 & \cite{lindell2012secure} & \chkmark & 2 & \chkmark & \chkmark & DDH & \crossmark & \implnotice \\

        2015 & \cite{asharov2015malicious} & \chkmark & 2 & \chkmark & \chkmark & DDH & \chkmark & $\sim 3680$ & \notprovidednotice & \crossmark & \chkmark & C (of \cite{peikert2008framework}) \\
        2015 & \cite{CO-OT15} & \crossmark & 2 & \chkmark & \chkmark & DH & \chkmark & 114 & \notprovidednotice & \chkmark & \chkmark & Assembly \\
        $\S$ & \cite{CO-OT15} & -- & -- & -- & -- & -- & -- & 257 & 76.76 & 68.91 & 567.04 & \chkmark & \chkmark & C \\

        2019 & \cite{liu2019universally} & \chkmark & 2 & \chkmark & \chkmark & \makecell[c]{Ideal lattice} & \crossmark & \implnotice \\
        2019 & \cite{masny2019endemic} & \chkmark & 2 & \chkmark & \chkmark & PKE (LWE, RO) & \chkmark & $\sim 200$ & \notprovidednotice & \chkmark & \chkmark & Assembly \\

        2020 & \cite{dottling2020two} & \chkmark & 2 & \chkmark & \chkmark & CDH & \crossmark & \implnotice \\
        2020 & \cite{aragon2020croot} & \chkmark & 2 & \chkmark & \chkmark & ROM (RQC-III) & \chkmark & 1760$^\dagger$ & 503.33$^\dagger$ & 1080$^\dagger$ & \notprovidednoticeshort & \chkmark & \crossmark & unspecified \\
        2020 & \cite{10.1007/978-3-030-45727-3_23} & \chkmark & 3 & \chkmark & \chkmark & LWE & \crossmark & \implnotice \\
        2020 & \cite{canetti2020blazing} & \crossmark & 3 & \chkmark & \chkmark & CDH & \chkmark & 164 & \notprovidednotice & (\chkmark) & \crossmark & Relic \\

        2021 & \cite{lai21compact} & \crossmark & 2 & \chkmark & \crossmark & Isogeny & \crossmark & \implnotice \\

        2022 & \cite{badrinarayanan2022efficient} & \chkmark & 2 & \chkmark & \chkmark & \makecell[c]{PKE (ROM)} & \crossmark & \implnotice \\

        2024 & \cite{branco2024two} & \chkmark & 2 & \chkmark & \chkmark & QR + LPN & \crossmark & \implnotice \\

        2024 & \cite{abadi2024supersonic} & \chkmark & 4 & \crossmark & \crossmark & NI & \chkmark & $\sim$350 & \notprovidednotice & \crossmark & \chkmark & C++ \\
        2024 & \cite{quietOT} & \crossmark & 2 & \chkmark & \chkmark & NI + LWE + ROM & \chkmark & N/A$^{\Delta}$ & \notprovidednotice & \chkmark & \chkmark & C \\

        2025 & \cite{berti2025lr} & \chkmark & 2 & \crossmark & \chkmark & DDH & \crossmark & \implnotice \\

        2025 & \cite{yang2025quantum} & \chkmark & 2 & \chkmark & \crossmark & \makecell[c]{Quantum ent.} & \crossmark & \implnotice \\

        2025 & \cite{yang2025all} & \crossmark & 2 & \chkmark & \chkmark & \makecell[c]{Quantum ent.} & \crossmark & \implnotice \\
        2026 & \cite{abadi2026oblivis} & \chkmark & 4 & \crossmark & \crossmark & NI & \chkmark & 530 & \notprovidednotice & \notprovidednoticeshort & \chkmark & C++ \\

        2026 & \bf{\IOT} & \chkmark & 2 & \chkmark & \chkmark & QR & \chkmark & 2366 & 2350.93 & 2.80 & 39.90 & \chkmark & \chkmark & C/Python \\

        \bottomrule

        \end{tabular}
    }}
    \\[0.5em]
    {\footnotesize\noindent
        $^{\P}$Acronyms:
            (ROM) Random Oracle Model;
            (CDH) Computational Diffie--Hellman;
            (DDH) Decisional Diffie--Hellman;
            (NI) Non-interactive tech.;
            (LWE) Learning With Errors;
            (QR) Quadratic Residues;
            (LPN) Learning Parity with Noise.
        \\[0.1em]
        \noindent$^\ddagger$ Computational security is set to 128 bits in all works, but \cite{abadi2024supersonic}, which relies on unconditional security based on NI.
        \\[0.1em]
        \noindent$^\star$ (\chkmark) allows pre-computation, yet with minor impact on overall performance ($<2\times$).
        \\[0.1em]
        \noindent$^{\Delta}$ OT Extension relies on non-interactive PKE setup, hence no baseOT is assessable.
        \\[0.1em]
        \noindent$^{\dagger}$ Performances evaluated in CPU cycles and estimated at 3GHz.
        \\
        \noindent$^\S$ Results obtained over our porting of the library.
    }
\end{table*}

Oblivious Transfer (OT), first introduced by Rabin in 1981~\cite{Rabin1981}, laid the foundation for secure multi-party computation by proposing a method to exchange secrets without either party learning superfluous information. Rabin's protocol leverages quadratic residues and public-key cryptography to ensure the Sender remains oblivious to the recipient's choice, inspiring extensive research on OT efficiency and security.

Building upon Rabin's foundation, Even, Goldreich, and Lempel formalised the one-out-of-two variant of OT in 1985~\cite{Even1985}, establishing it as a cornerstone for secure multi-party computation, private information retrieval, and privacy-preserving cryptographic protocols. We refer the reader to Yadav et al.~\cite{Yadav2022} for a broader survey of OT protocols and their applications.

\OT protocols can be extended to OT$_1^n$ and OT$_k^n$, allowing the Receiver to select one or multiple messages out of $n$ while preserving obliviousness and message privacy~\cite{naor1999oblivious}. Such extensions underpin advanced cryptographic functionalities, including secure function evaluation and private set intersection. As a result, OT has been extensively studied under a variety of assumptions, including classical number-theoretic assumptions (\eg, DDH, CDH), lattice-based primitives, and generic dual-mode public-key encryption frameworks~\cite{peikert2008framework}.

Naor and Pinkas~\cite{naor2001efficient} improved OT by constructing efficient 1-out-of-$n$ OT schemes from the DDH assumption, with logarithmic overhead in the selection domain and round-efficient communication, becoming a reference point for many practical implementations. Subsequent works refined OT under different security and communication models, including adaptive queries~\cite{naor1999oblivious}, noisy channels~\cite{crepeau2005efficient}, and information-theoretic Receiver privacy~\cite{tzeng2004efficient}.

A closely related primitive is Random Oblivious Transfer (ROT), also referred to as OT with random inputs, in which the Sender does not choose the messages but instead learns two uniformly random strings generated by the protocol, while the Receiver learns exactly one of them. Introduced and formalised by Cr\'epeau in the context of noisy-channel cryptography~\cite{crepeau2005efficient}, ROT was shown to be equivalent to standard \OT, as each primitive can be obtained from the other via local post-processing. This equivalence has made ROT the preferred abstraction in modern protocol design, particularly in settings where correlated randomness is generated independently of application-level inputs.

A major line of work focuses on OT extensions. Ishai et al.~\cite{ishai2003extending} introduced the OT extension paradigm, enabling a large number of OTs using only a small number of expensive base (R)OTs and predominantly symmetric-key operations. This paradigm was later refined to achieve stronger guarantees, including malicious security~\cite{asharov2015malicious}, and is now the de facto standard for large-scale secure computation. More recently, Couteau et al.~\cite{quietOT} proposed QuietOT, an OT-extension framework with a public-key setup, allowing parties to generate pseudorandom OT correlations after a one-time public-key phase and with only lightweight online operations. Unlike traditional OT extension, QuietOT reduces the need for an interactive base-OT setup. As a result, however, the concrete cost of the initial batch of base OTs is often treated as negligible, an assumption that holds only when the number of extended OTs is sufficiently large.

Beyond OT extension, several frameworks construct OT generically from public-key encryption (PKE). Peikert, Vaikuntanathan, and Waters~\cite{peikert2008framework} proposed a composable framework yielding UC-secure OT in the CRS model under assumptions such as DDH, quadratic residuosity, and LWE. Follow-up works, including endemic OT~\cite{masny2019endemic} and subsequent optimizations~\cite{badrinarayanan2022efficient}, further improved efficiency and multi-user security. However, public-key operations remain computationally expensive and often impractical on low-power devices.
A complementary line of inquiry concerns the formal relationship between OT and Private Information Retrieval (PIR), in which a Receiver privately fetches an entry from a remote database. The two primitives are tightly intertwined: Di Crescenzo et al.~\cite{diCrescenzo2000pir} showed that any non-trivial single-server PIR implies OT, while conversely 1-out-of-$n$ OT directly realises symmetric PIR (SPIR). Modern PIR research has substantially advanced the communication efficiency of single-server retrieval, yet these constructions typically rely on heavy public-key or homomorphic primitives that remain prohibitive on constrained Receivers --- a constraint that further motivates lightweight base OT designs of the kind proposed here.

Considerable effort has also been devoted to optimising the concrete efficiency of base OT protocols. The Simplest OT protocol by Chou and Orlandi~\cite{CO-OT15} (whose security has been revised in~\cite{CO-OT18} and further re-examined in~\cite{postCO-OT15}) achieves extremely low concrete cost using elliptic-curve cryptography (ECC) and is widely adopted in practice, despite offering weaker security guarantees than modern simulation-based frameworks. More recent protocols, such as Blazing OT~\cite{canetti2020blazing}, achieve UC security with competitive performance, but still rely on ECC and often benefit from hardware acceleration, which is typically unavailable on constrained low-power platforms.

Several works further refine communication and round complexity in the batch setting; for example, Branco, D\"ottling, and Srinivasan~\cite{branco2024two} propose a two-round batch OT protocol with near-optimal communication based on quadratic residuosity and the Learning Parity with Noise assumption. Alternative directions explore different assumptions and frontiers, including LWE-based OT~\cite{10.1007/978-3-030-45727-3_23}, isogeny-based OT~\cite{lai21compact}, and quantum or post-quantum OT variants~\cite{sarkar2024efficient,zhang2023practical}, typically targeting powerful platforms.

Other applications of the OT protocol include machine learning and AI implementation. These studies~\cite{gan2025cuot,lin2025ironman} demonstrate significant throughput improvements in large-scale secure computation and privacy-preserving machine learning by exploiting massive parallelism and hardware acceleration.

In parallel to classical software-oriented OT designs, several recent works also explore orthogonal directions that differentiate from the traditional setting. A first line of research investigates quantum variants of OT, extending the primitive to quantum messages or even to unknown unitary operations applied remotely: these protocols rely on entanglement~\cite{yang2025quantum}, Bell-state measurements~\cite{yang2025all}, and quantum communication to achieve all-or-nothing semantics, and are primarily motivated by the intrinsic privacy-preserving properties inherently provided by distributed quantum computations.

Finally, more closely related to the present contribution, Abadi and Desmedt~\cite{abadi2024supersonic} introduced Supersonic OT, an ultra-efficient OT extension protocol achieving unconditional security without public-key cryptography, later generalized to OT$_1^n$ and OT$_k^n$~\cite{Abadi25}.
A step forward is provided in~\cite{abadi2026oblivis}, where SupersonicOT is embedded in a more general framework offering delegated-query and multi-receiver OT, while targeting also low-end devices.
Although achieving remarkable concrete performance, the solution relies on a trusted preprocessed setup, a trusted proxy mediation, and is not devised to decouple sender and receiver interactions.

The connection between OT and quadratic residues originates from Rabin's protocol~\cite{Rabin1981} and is further developed in modern QR-based constructions~\cite{peikert2008framework,branco2024two}. Rather than aiming for maximum generality or asymptotic efficiency, our construction explicitly targets scenarios where the Receiver is severely resource-limited. By delegating most cryptographic workload to the Sender or to an offline phase, the Receiver's online computation is reduced to a minimal set of operations, making the protocol practical even on IoT devices. Table~\ref{tab:comparison_table} summarises representative OT protocols and contrasts them with our proposal~\IOT in terms of structural properties and computational analysis.


\section{Notation and Background}
\label{sec:notation}

\begin{table}[t]
    \centering

    \def\arraystretch{1.2}
    \caption{
        Table of symbols, divided by phase and ownership.
    }
    \label{tab:notations}
    \setlength{\tabcolsep}{8pt}
    \footnotesize
    \begin{tabular}{c|c|c|l}
        \toprule
        \textbf{Var} & \textbf{Definition} & \textbf{Owner} & \textbf{Description} \\
        \midrule
        \midrule
        \multicolumn{3}{c}{\textbf{Setup}}\\
        \midrule
        $p, q$ & $\equiv_41$ & Sender & RSA-like prime numbers\\
        $n$ & $=p\cdot q$ & Shared & Integer modulus\\
        $\ZZ_n$ & --- & Shared & Ring of integers modulo $n$ \\
        $\lambda$ & $=\log_2(n)$ & --- & Bit size of the modulus $n$ \\
        $\kappa$ & --- & --- & Security parameter \\
        $F$ & $\KKK\times Z_n \to \YYY$ & --- & Family of KDF \\
        $\KKK$ & --- & --- & Key space for the KDF\\
        \multirow{1}{*}{$\YYY$} & \multirow{1}{*}{---} & \multirow{1}{*}{---} & Codomain of the KDF (message space)\\
        \multirow{1}{*}{$H$} & \multirow{1}{*}{$\ZZ_n \to \HHH$} & \multirow{1}{*}{---} & Cryptographic hash function (used for roots digests)\\
        $\HHH$ & --- & --- & Root tag/digest space\\
        $I$ & $=\sqrt{-1}$ & Sender & Square root of $-1$ mod $n$\\
        \midrule
        \multicolumn{3}{c}{\textbf{Transfer phase}}\\
        \midrule
        $M_0, M_1$ & $\in \YYY$ & Sender & Oblivious transfer messages \\
        $k$ & $\in \ZZ_n$ & Receiver & Oblivious transfer session key \\
        $d$ & $H(k)$ & Receiver & Session key digest \\
        $t$ & $=k^2 \bmod n$ & Receiver & Residue of the session key \\
        $\hat k$ & $\in\ZZ_n$ & --- & Conjugate of the session key \\
        $b$ & $\in \{0, 1\}$ & Receiver & Oblivious transfer choice\\
        $r$ & $\in\ZZ_n$ & Shared & Residue hiding the choice\\
        $s$ & $\in\KKK$ & Shared & Key for the KDF $F$\\
        $\{k_i^{(j)}\}$ & $\in\ZZ_n$ & Sender & Set of plausible session keys\\
        $i$ & $\in \{b, 1-b\}$ & Shared & Receiver possible choices\\
        $j$ & $\in \{0, 1\}$ & Shared & Receiver conjugate index\\
        $\hat j$ & $\in \{0, 1\}$ & Receiver & Actual conjugate index\\
        $\{c_i^{(j)}\}$ & $=M_i\oplus K_i^{(j)}$ & Shared & Set of ciphertexts\\
        $\{d_i^{(j)}\}$ & $=H(k_i^{(j)})$ & Shared & Plausible session key digests\\
        $\{K_i^{(j)}\}$ & $=F_s(k_i^{(j)})$ & Sender & Set of plausible message keys\\
        $K$ & $=F_s(k)$ & Receiver & Actual message key\\
        \midrule
        \multicolumn{3}{c}{\textbf{Check congruency phase}}\\
        \midrule
        $\ell$ & --- & Shared & Number of tests\\
        $X$ & $\in \ZZ_n^\ell$ & Receiver & Test set\\
        $c$ & $\in \{1, 2\}$ & Receiver & Challenge choice\\
        $y$ & $=-x^c \bmod n$ & Shared & Challenge value\\
        $r$ & $\in\{\texttt{T}, \texttt{F}\}$ & Shared & Challenge response\\
        $R$ & $\{r \mid c = 1\}$ & Receiver & Response set of $c=1$ answers\\
        \bottomrule
    \end{tabular}
\end{table}

In what follows, we identify with $\ZZ_n$ the ring of integers modulo $n = p \cdot q$ (with $p, q$ primes) and we say that a given $a \in \ZZ_n$ is positive if $a < \sfrac n2$. All of the operations, if not differently specified, are carried out in $\ZZ_n$.

We state with $\lambda$ the size of the modulus $n$ and we report with $\kappa$ the corresponding (symmetric) security parameter (\eg $\lambda = 3072 \Rightarrow \kappa = 128$, see \cite{NIST-security}).

Quadratic residues modulo $n$ are identified by the set $$\QR_n := \{x \in \ZZ_n \mid \exists t \in \ZZ_n,\ t^2 \equiv_n x\}$$ (or simply $\QR$) and we recall that evaluating whether $x\in\QR$ is considered hard, the so-called \textit{Quadratic Residuosity Assumption} (QRA), unless the factorisation of $n$ is known; in such case, many different approaches do exist, including the \textit{Tonelli-Shanks} algorithm which also allows evaluating the two couples of roots in $\ZZ_p$ and $\ZZ_q$ which can be combined into four roots in $\ZZ_n$ according to the \textit{Chinese Remainder Theorem} (CRT), \ie solving a modular system of two equations.

We recall a \emph{Key Derivation Function} (KDF) being a function transforming keying material into one or more cryptographically strong outputs.
Typical goals are entropy stretching and domain separation, with Kessak-based KDF (SHAKE-based, \cite{SHA-3}) being a widely deployed construction.
Here, different useful properties do hold, including pre-image and second-preimage resistance.
More formally, the core of KDFs can be abstracted as \emph{Pseudorandom Functions} (PRFs), \ie keyed family of functions
\[
    F: \KKK \times \XXX \to \YYY
\]
such that, for any random key $k \in \KKK$, $F_k(\cdot)$ is computationally indistinguishable from a random function.
In what follows, we assume $\XXX=\ZZ_n$.

We define a message $M$ as a generic bit-string, $M \in \{0, 1\}^* = \PPP$, and we denote with $\oplus$ the bit-wise xor.
In what follows, we assume $\PPP = \YYY$, namely the message is compatible with the output of $F$ so that messages can be efficiently encrypted \textit{\`a la} One Time Pad (OTP) as $c = M \oplus y$.

Finally, we define with $H$ a generic cryptographic hash function over $\ZZ_n$:
$$H : \ZZ_n \to \HHH\,.$$
In our construction, $H(\cdot)$ is later used to discriminate over quadratic residues (see later $d_i^{(j)}$), hence it is important to recall that pre-image and second-preimage resistance hold;
however, digests of $\HHH$ might be modest in size, if compared with $n$.

For the reader's convenience, we provide in Table~\ref{tab:notations} a list of all symbols adopted within the manuscript, arranged according to the protocol's phases.


\section{The \IOT Protocol}
\label{sec:IOT2}

In this section, we describe the architecture of the proposed \IOT model, our novel lightweight OT encryption scheme based on the problem of quadratic residuals, able to deliver an \OT-transfer to the Receiver requiring only a few transmissions and operations.
We recall that operations are carried out within the ring $\ZZ_n$ if not differently specified.
Here $n = p\cdot q$, with $p, q$ being primes chosen by the Sender such that $p, q \equiv_4 1$ (this ensures that $-1 \in \QR$, \ie $\exists I \in \ZZ_n$ such that $I^2 \equiv_n 1, I \ne 1$, $I := \sqrt{-1}$) and (eventually) validated by the Receiver with a dedicated protocol (see later Section~\ref{ssec:untrustedServer} for a discussion on untrusted senders).


\subsection{The system and security models}
\label{ssec:system-model}

We rely on the classical OT model, which we briefly recall here for completeness.
The system consists of two logical entities interacting over a public communication channel
\begin{description}
	\item[Sender] The Sender holds pairs of messages $(M_0, M_1)$ and is responsible for executing the transmission protocol.
    The Sender is assumed to be computationally capable, and can efficiently perform expensive operations such as square root extraction in $\mathbb{Z}_p \times \mathbb{Z}_q$.
    For what concerns security properties, the server might be modelled as an unbounded party (being the security of the choice obliviousness, see below, proved under unconditional, information-theoretic, guarantees).
	\item[(Set of) Receiver(s)] The Receiver selects a bit $b \in \{0,1\}$ and interacts with the Sender to retrieve specifically the message $M_b$.
    Receivers are assumed to be computationally constrained and are required to perform only lightweight operations.
    Importantly, our security guarantees do not rely on this assumption, and remain valid even when Receivers are capable of heavy-computation (or can delegate them to others).
\end{description}
We can then formalise the scheme as follows.
\begin{cryptoscheme}[\OT]\label{scheme:ot}
\hfill\\
    An \OT~scheme consists of the following three algorithms:
	\begin{description}
		\item [$\setup(\kappa)\to(\sk, \pk)$:] is the initialisation algorithm run by the Sender to set up a pair of public keys ($\pk = n$) and a secret key ($\sk = (p, q)$) starting from a security parameter $\kappa$ ($\kappa$ determines the length $\lambda$ of the generated modulus).
		\item [$\transfer(\pk, b)\to (M_b, \bot)$:] is the online interactive transfer that is run by the Receiver who selects a message $b \in \{0, 1\}$ and queries the Sender for the two messages $M_0$ and $M_1$ (properly encrypted), of which only $M_b$ will be decryptable by the Receiver itself.
		\item [$\checkreveal()\to\sk$:] is an optional phase where the Receiver queries the Sender for its secret key so that he can check the entire protocol was played fairly.
	\end{description}
    In particular, the transfer phase is required to satisfy the following security properties:
    \begin{description}
        \item[Choice obliviousness] The Sender and any external attacker learn no information regarding the Receiver's choice bit $b$.
        \item[Messages obliviousness] The Receiver learns no information about $M_{1-b}$, while external adversaries learn nothing about either message.
    \end{description}
\end{cryptoscheme}

Do notice that the optional \checkreveal phase allows the Receiver to verify that the protocol was executed fairly.
In fact, while the Receiver is unable to cheat the Sender by design, the Sender may trivially send identical (or slightly modified) messages as $(M_0, M_1)$, a typical drawback affecting OT-protocols.
The \checkreveal mechanism hence enables post-hoc detection of such misbehaviour, making it a valuable tool for the Receiver.

For what concerns protocol security, it is in general not possible to provide unconditional (information-theoretic) guarantees on both the Receiver and the Sender at the same time.
In particular, as we will see later in Section~\ref{sec:security-analysis}, our design offers unconditional choice obliviousness while granting computational message security.


\subsection{The threat models}
\label{ssec:threat-models}

We are now ready to describe the threat model for \IOT.
Let us consider the following four entities involved in the protocol:
\begin{description}
    \item[Sender (s-)] In what follows, we assume the Sender being semi-trusted (honest-but-curious), meaning that it follows the protocol honestly (\ie, it does not deviate from the prescribed steps) but tries to learn as much as possible about the choice $b$ from the communications it receives during the execution of the protocol itself. We will discuss specifically the security of malicious senders in Section~\ref{ssec:untrustedServer}.
    \item[Receiver(s) (c-)] We assume the Receiver(s) as untrusted, meaning that it (they) can deviate from the protocol and use any strategy to try breaking the messages' obliviousness or the protocol's cryptographic security.
    \item[Passive attacker (p-)] We consider computational bounded passive attacker capable of eavesdropping all the communication occurring on the public channels, including multiple independent communications.
    \item[Active attacker (a-)] We suppose active attackers being able to control the communication channel and, hence, capable of read, intercept, modify, and generate messages at will. We do not consider poisoning attacks since we discuss the security against malicious receivers in what follows and against malicious senders later.
\end{description}

In particular, we identify the following eight threats:
\begin{description}
    \item[(s-i)] Sender breaking choice obliviousness: the Sender successfully retrieves the choice $b$ operated by the Receiver during the transfer phase.
    \item[(c-i)] Receiver breaking cryptographic primitive: the Receiver is able to manipulate the protocol to break the security of the key $n$, \ie retrieve its factorisation $p, q$.
    \item[(c-ii)] Receiver breaking messages obliviousness: the Receiver is not able to break the primitive, but it is able to manipulate the protocol to successfully retrieve both messages $M_0$ and $M_1$ regardless of the choice $b$ operated.
    \item[(c-iii)] Receivers collusion: multiple receivers adopting the very same Sender key-pair can collude to gain a non-negligible advantage on any two of the previous tasks.
    \item[(p-i)] Eavesdropping attack: An external attacker can retrieve either the message(s) or the Receiver's choice by solely monitoring the traffic.
    \item[(a-i)] Receiver impersonation attack: The attacker can impersonate the Receiver during the transfer phase to retrieve one (or both) the messages.
    \item[(a-ii)] Sender impersonation attack: The attacker can impersonate the Sender during the transfer phase to communicate a fake, valid message (for breaking choice obliviousness, see \textbf{(s-i)}).
    \item[(a-iii)] MitM attack: An external attacker can break into an active transfer to retrieve the messages, the bit choice, or to transmit chosen messages by acting as a man-in-the-middle, \ie potentially modifying the messages passing through the public channel without the parties noticing it.
\end{description}

\begin{table}[!t]
    \centering
    \caption{Threat model and security assumptions.}
    \label{tab:threat_models}
    \footnotesize
    \setlength{\tabcolsep}{8pt}
    \def\arraystretch{1.2}
    \begin{tabular}{c|c|c|c}
        \toprule
        Attacker & Model & Target & Security assumption \\\midrule\midrule
        Sender & \textbf{(s-i)} & Choice obliviousness & Unconditionally secure\\
        \midrule
        \multirow{3}{*}[-5pt]{Receiver} & \textbf{(c-i)} & Factorisation of $n$ & \multirow{2}{*}[-3pt]{\makecell{indistinguishability \\ + reduction}}\\
        \cmidrule{2-3}
        & \textbf{(c-ii)} & Message obliviousness & \\
        \cmidrule{2-4}
        & \textbf{(c-iii)} & Collusion & tNM-CCA\\
        \midrule
        Passive & \textbf{(p-i)} & Eavesdrop messages & \makecell{indistinguishability \\ + tNM-CCA} \\
        \midrule
        \multirow{3}{*}[-5pt]{Active} & \textbf{(a-i)} & Receiver impersonation & --$^\dagger$\\
        \cmidrule{2-4}
        & \textbf{(a-ii)} & Sender impersonation & \multirow{2}{*}[-3pt]{\makecell{Public-key \\ security}}\\
        \cmidrule{2-3}
        & \textbf{(a-iii)} & MitM & \\
        \bottomrule
    \end{tabular}\\[.5em]

    $^\dagger$Can be achieved with Receiver single-side authentication layer.
\end{table}

We briefly discuss the corresponding security in what follows, and we resume the key concepts in Table~\ref{tab:threat_models}.

The choice obliviousness \textbf{(s-i)} is unconditionally secure by design, while the security of the cryptographic primitive \textbf{(c-i)} and the messages obliviousness \textbf{(c-ii)} for single messages is guaranteed under the properties of the KDF and secure hash functions combined with a classical reduction to the factorisation problem. Receivers' collusion-resistance \textbf{(c-iii)}, as well as security for \textbf{(c-i)} and \textbf{(c-ii)} against message combination, are discussed under chosen ciphertext attacks with tag-non-malleability (see also~\cite{McQuoid2021}). In particular, here we recall a protocol being non malleable (NM) whether an attacker is not able to create a valid ciphertext with some arbitrary correlation with known ciphertexts~\cite{Bellare1998}.
Tag-non-malleability under Chosen ciphertext Attack (tNM-CCA1) implies in this context that the Receiver is not able to reconstruct a valid ciphertext with tag $s$ without breaking the security of the protocol, even if it is able to decrypt polynomially many OT instances using different tags $s' \ne s$~\cite{Pass2005}. We also recall that NM-CCA1 also implies indistinguishability under the same assumption (IND-CCA1, \cite[3.1]{Bellare1998}).

Eavesdropping resistance \textbf{(p-i)} is obtained under ciphertext indistinguishability and tag-non-mal\-le\-a\-bi\-li\-ty.

Conversely, Sender impersonation during the transfer phase \textbf{(a-ii)} is unfeasible due to the impossibility of breaking the public key (as it is either for the Receiver itself).
As a consequence, MitM attacks \textbf{(a-iii)} are unfeasible as well, as the attacker would be unable to impersonate the Sender with the Receiver or retrieve the messages encrypted with the Receiver key.
Finally, being the Receiver not involved during the setup phase, an attacker is always able to impersonate a not-authenticated Receiver \textbf{(a-i)}; however, this is endemic with the setting itself and can be easily mitigated by adding a single-side authentication layer between the Receiver and the Sender during the transfer phase.


\subsection{The scheme}
\label{ssec:scheme}

In what follows, we present the three algorithms, also depicted in Figures~\ref{fig:setup} and~\ref{fig:transfer}, that build our method as presented in Scheme~\ref{scheme:ot}.

\subsubsection{Setup phase}

During the setup phase, the Sender is required to build a public modulus $n$ as a public key-pair of size $\lambda$, with $\lambda$ chosen according to~\cite{NIST-security} depending on the required security level $\kappa$.
In particular, it is important to notice that the setup requires no interaction with the Receiver, actually making it a Sender-exclusive phase: this also allows the Sender to use the same setup even with different receivers, which can potentially help each other to ensure the trustworthiness of the Sender (see later Section~\ref{ssec:untrustedServer}).

In detail (see also Figure~\ref{fig:setup}), the Sender chooses two large primes $p, q$ exercising all due caution, like requiring that $p - q$ is big, and evaluates $n = p\cdot q$. In particular, values are chosen such that $n \sim 2^\lambda$ and $p, q \equiv_4 1$ must hold as this ensures that $I \in \QR$ mod $n$, which is required for the protocol to be oblivious from the Sender side.
In this phase, the Sender can also pre-evaluate $I$ to later speed up the transfer phase.

\begin{figure}[!t]
    \centering
    \def\arraystretch{1.5}\small
    \begin{tabular}{|L{5.5cm}C{2.0cm}L{5.5cm}|}
        \hline
        \textbf{Sender} & \textbf{Public} & \textbf{Receiver}\\
        \hline
        \textbf{choose security} \textbf{parameter} $\kappa, \lambda$ && \\
        \textbf{choose} $p \ne q$ \textbf{primes} && \\
        \phantom{xx}\textbf{s.t.}\ $p\cdot q \sim 2^\lambda$ \textbf{and}\ \ $p, q \equiv_4 1$ &&\\
        $n \lassign p\cdot q$ & $\xmapsto[\phantom{xxxxx}]{n}$ & $n$\\
        \textbf{precompute} $I = \sqrt{-1}$ & & \\
        \hline
    \end{tabular}
    \caption{Setup phase of the oblivious transfer.}
    \label{fig:setup}
\end{figure}

\subsubsection{Transfer phase}

Let $M_0, M_1 \in \YYY$ be the two messages the Sender is willing to transfer to the Receiver.

Conversely to a classical \OT where the transfer phase is opened with the Sender sending two nonces to the Receiver to be used as message keys/discriminant, in our protocol the Receiver session key $k$ is unique and it is generated by the Receiver itself as a random positive value in $\ZZ_n$ (the usual caution of choosing $k>\sqrt{n}$ and $\gcd(k, n) = 1$ must be exerted).
This is particularly relevant since, apart from reducing the number of required rounds, it allows the Receiver to pre-evaluate $k$ and subsequently pre-compute its corresponding residue $t=k^2 \pmod n$ and its digest $d = H(k)$.

The Receiver then operates a choice on which message to pick, say $b \in \{0, 1\}$, and masks $k$ by squaring it and communicating either $r = t$ or $r = -t \pmod n$ (\ie $r = n-t$), depending on whether $b=0$ or $b = 1$ respectively.

Here it is important to recall that under the assumption $p, q \equiv_4 1$, $x \in \QR$ has exactly four roots in the form $a, -a, b, -b$ and, since both $-1, k^2 \in QR$, we have that both $r, -r \in \QR$. In fact, if $b = 0$ we have
\[
    \begin{cases}
        r = t = k^2 & \pmod n\\
        -r = -t = -k^2 & \pmod n
    \end{cases}\,,
\]
while if $b = 1$ we have
\[
    \begin{cases}
        r = -t = -k^2 & \pmod n \\
        -r = t = k^2 & \pmod n
    \end{cases}
    \ .
\]
It follows that the Sender is able to compute both $\sqrt{r}$ (mod $n$) and $\sqrt{-r}$ (mod $n$) using the pre-computed $I$;
let $k_0^{(0)}, k_0^{(1)}$ be the two positive roots of $r$ and let $k_1^{(0)}, k_1^{(1)}$ be the two positive roots of $-r$.

In particular, do notice that $k \in \{k_i^{(j)}\}_{i, j \in \{0, 1\}}$. However, the Sender is not able to determine $\hat i \equiv b$, that is $\hat i \in \{0, 1\}$, such that $k \in \{k_{\hat i}^{(j)}\}_{j \in\{0, 1\}}$.
Concurrently, the Receiver must remain oblivious to the second root $\hat k$ in $\{k_{\hat i}^{(j)}\}$ as it is conjugate with $k$, namely $\{k_{\hat i}^{(j)}\} = \{k, \hat k\}$ and its disclosure would break the factorisation of $n$.
In this regard, let $s$ be a transfer nonce sampled by the Sender from $\KKK$ and communicated to the Receiver.
Here $s$ is used to agree with the Receiver upon a KDF $F_s(\cdot)$ to hide the values of $\{k_i^{(j)}\}$, hence generating four plausible session keys $K_i^{(j)} = F_s(k_i^{(j)})$.

Being randomly distributed due to the properties of $F_s(\cdot)$, the Sender can then use the two couples of session keys to encrypt the messages $M_0$ and $M_1$ with a simple bit-wise xor operation, hence producing four ciphertexts of the form $c_i^{(j)} = M_i \oplus K_i^{(j)} = M_i \oplus F_s(k_i^{(j)})$.
These ciphertexts are finally sent to the Receiver alongside with the digest of the four plausible roots $d_i^{(j)} = H(k_i^{(j)})$ for the Receiver to be able to match the correct $\hat i = b$ and $\hat j$\footnote{\label{fn:replaceHashWithMask}Do note that here a possible alternative is to use some sort of small tag $f$ to be prefixed/postfixed to the message instead of sending $\{d_i^{(j)}\}$. In particular, this avoids executing the hash operation at the cost of restricting the message space by the size (in bits) of $f$. Left for future work.}.

Upon delivery, the Receiver is able to determine which message $c_b^{(\hat j)}$ he can decrypt with $k$ by checking which digest $d_b^{(j)}$ equals the precomputed $d=H(k)$.
The transfer phase is finally concluded by decrypting $M_b$: this can be done by simply evaluating $K = F_s(k)$ and xor-ing it from the correct ciphertext $c_b^{(\hat j)}$.

\begin{figure*}[!t]
    \centering
    \def\arraystretch{1.5}
    \scalebox{.97}{\begin{tabular}{|L{5.5cm}C{2.0cm}L{5.5cm}|}
        \hline
        \multicolumn{3}{|c|}{Pre-computation (Offline)}\\
        \hline
        \textbf{Sender} & & \textbf{Receiver}\\
        \hline
        \textbf{retrieve:} & & \textbf{retrieve:}\\
        \phantom{xx} $n$, $p$, $q$, $I$ & & \phantom{xx} $n$\\
        \textbf{precompute:} & & \textbf{precompute:}\\
        \phantom{xx} $s \sample \ZZ_n$ && \phantom{xx} $k \sample \ZZ_{\sfrac n2}$ \textit{s.t.}\ $\gcd(k, n) = 1$\\
        && \phantom{xx} $t \lassign k^2 \quad\mod n$\\
        && \phantom{xx} $d \lassign H(k)$\\
        \hline\multicolumn{3}{c}{}\\\hline
        \multicolumn{3}{|c|}{Transfer (Online)}\\
        \hline
        \textbf{Sender} & \textbf{Public} & \textbf{Receiver}\\
        \hline
        $M_0, M_1 \in \YYY$
        &&\textbf{choose} $b \in \{0, 1\}$ \textbf{message}\\
        $r$ & $\xmapsfrom[\phantom{xxxxxxx}]{r}$ & $r \lassign (-1)^b \cdot t \quad\mod n$\\
        & $\xmapsto[\phantom{xxxxxxx}]{s}$ & $s$\\
        $\begin{rcases*} k_0^{(0)} \\ k_0^{(1)} \end{rcases*} \lassign \sqrt{r}$ \;\;\;\;\; \textit{, chosen in $\ZZ_{\sfrac n2}$}&& $K \lassign F_s(k)$\\[1em]
        $\begin{rcases*} k_1^{(0)} \\ k_1^{(1)} \end{rcases*} \lassign \sqrt{r}\cdot I$ \; \textit{, chosen in $\ZZ_{\sfrac n2}$}&&\\
        $c_i^{(j)} \lassign M_i \oplus F_s\big(k_i^{(j)}\big), \quad i, j = 0, 1$ & $\xmapsto[\phantom{xxxxxxx}]{\{c_i^{(j)}\}}$ & $c_0^{(0)}, c_0^{(1)}, c_1^{(0)}, c_1^{(1)}$\\
        $d_i^{(j)} \lassign H(k_i^{(j)}), \quad i,j = 0, 1$ & $\xmapsto[\phantom{xxxxxxx}]{\{d_i^{(j)}\}}$ & $d_0^{(0)}, d_0^{(1)}, d_1^{(0)}, d_1^{(1)}$\\
        && \textbf{let} $\hat j \in \{0, 1\}$ \textbf{s.t.}\ $d = d_b^{(j)}$\\
        && $M \lassign c_b^{(\hat j)} \oplus K$\\
        \hline
    \end{tabular}}

    \caption{Transfer phase of \IOT divided into offline and online operations.}
    \label{fig:transfer}
\end{figure*}

\subsubsection{CheckReveal phase}

The check-reveal phase corresponds to breaking down the scheme to ensure that the protocol was run fairly.

It consists of the Sender revealing $p$ and $q$ to the Receiver.
This clearly allows the Receiver to evaluate all the roots of $r$ and $-r$ and hence to verify that messages $M_0$ and $M_1$ were actually different, as they can be decrypted from the corresponding $\{c_i^{(j)}\}$ with ease.

It is interesting to notice that Receiver-side, obliviousness is maintained in this scenario (and hence also in the case of a compromising of the system, see later the upcoming Section~\ref{ssec:securityAnalysis}).
If the context requires the Receiver's choice to be provable too, then the Receiver can reveal $k$ as well, so that the Sender can determine which $\hat i$ is such that $k \in \{k_{\hat i}^j\}_{j\in\{0, 1\}}$; in this case, $k$ should be communicated prior to the factorisation of $n$ in order to be trustworthy (otherwise the Receiver can evaluate the other roots with ease).

Also, it is notable that the choice reveal does not require the Sender to interact with the Receiver, hence requiring no disregarding of the entire system; a similar outcome can be achieved also on the Sender side, actually mounting a partial check-reveal that allows the Sender provably reveal both messages while keeping the system intact.
To do so, a simple yet effective way is, for example, to let the Sender generate two more message nonces $s_0$ and $s_1$, and to send the Receiver the digests $\hat M_i = H(M_i \oplus s_i)$ along with the ciphertexts; this allows the Receiver to verify $M_{1-b}$ if $M_{1-b}$ and $s_{1-b}$ are revealed, while not compromising the obliviousness of $M_{1-b}$, even in the case of short messages (\ie when plaintext space is small enough to allow a brute force search).


\subsection{Transfer complexity analysis}
\label{ssec:complexity-analysis}

In this section, we analyse the computational complexity of the proposed \IOT protocol in terms of the key size $\lambda$.

In particular, all the operations (but $H(\cdot), F_s(\cdot)$ and $\oplus$ evaluations) are performed in $\mathbb{Z}_n$ where, typically, optimal algorithms do exist that execute either in $\OOO(\log n)$ or $\OOO(\log p + \log q)$ bit operations, with $n = p\cdot q$.
Here we can recall that $\log n \sim \lambda$ and, due to the typical cryptographic constraints on $p$ and $q$, we can claim $\log p \sim \sfrac\lambda2 \sim \log q$, hence bounding most operations (including also bit-wise xor) to $\OOO(\lambda)$.

In what follows, we separately assess the operations executed by the \emph{Receiver} during transfer (online) and prior to engage it (offline) and by the \emph{Sender}, as well as the relative communication cost.

\subsubsection{Receiver-side complexity}

The Receiver's overhead is purposely kept minimal to accommodate resource-constrained devices as it is common in IoT applications.
For this reason, we design simpler operations on the Receiver side, a complexity analysis of which is detailed in what follows and summarised in Table \ref{tab:receiver_cost}.
\begin{description}
    \item[Random choices] The Receiver is required to offline uniformly sample (or choose) a choice bit $b$ and a positive ring element $k \sample \ZZ_{\sfrac{n}{2}}$, accounting for $\OOO(\lambda)$ random bit generations. Furthermore, ensuring $\gcd(k, n) = 1$ and $k > \sqrt{n}$ can also be carried out within $\OOO(\lambda)$ operations in the worst case.

    \item[Modular squaring] To evaluate the quadratic residue $r = (-1)^b \cdot k^2 \bmod n$, the Receiver performs one squaring and one additive inversion modulo $n$, summing up to one addition ($n-t$ if $b = 1$) and one multiplication which can be carried out with $\OOO(\lambda^2)$ bit operations under classical algorithms. Squaring can be performed offline, while inversion depends on the individual interaction (yet offering nearly negligible overhead).

    \item[Message decryption] In order to determine the correct ciphertext to decrypt, the Receiver must perform a single hash evaluation $H(k)$, typically requiring $\OOO(\lambda)$ bit operations, which can be pre-evaluated.
    Then, at transfer time, a single evaluation of $F_s(\cdot)$ and a single xor are sufficient to retrieve $M_b$.
    Here, evaluating $F_s(\cdot)$ clearly dominates the online phase, potentially leading to unexpected increases in complexity.
    However, different solutions do exist to keep the computational burden low, while maintaining the essential properties required for securely and correctly run the protocol; in our implementation (see later Section~\ref{sec:experiments}), \eg, we considered the sponge hash function SHA-3 \cite{SHA-3} in its XOF form (namely SHAKE-256/512), actually instantiating $F_s(k) = H_3(s+k)$, where $s$ is chosen in $\ZZ_n$ for simplicity and $H_3$ represent SHAKE-256.
    This keeps the computational burden bounded to $\OOO(\lambda)$\footnote{Here, to be precise, computational cost is $\OOO(2\lambda\cdot \gamma^e)$, for some $e \in (1, 2]$, where $\gamma$ represent the required security level of the hash function under collision, namely an attacker is required $\OOO(2^\gamma)$ attempts to find a collision.}.
\end{description}

\begin{table}[!t]
\caption{Transfer phase (Offline and Online) computational cost in bit-operations (Receiver-side).}
\label{tab:receiver_cost}
    \footnotesize
    \centering
    \def\arraystretch{1.2}
    \setlength{\tabcolsep}{8pt}
    \begin{tabular}{c|c|c|c}
        \toprule
        \textbf{Operation} & \textbf{Cost per op} & \textbf{Number of op} & \textbf{Total cost} \\
        \midrule
        \multicolumn{4}{c}{Pre-computation (Offline)}\\
        \midrule
        Bit generation & $\OOO(1)$ & $\OOO(\lambda)$ & $\OOO(\lambda)$\\
        Greatest common divisor & $\mathcal{O}(\lambda)$ & 1 & $\mathcal{O}(\lambda)$ \\
        \textbf{Modular multiplication} & $\mathcal{O}(\lambda^2)$ & 1 & $\mathcal{O}(\lambda^2)$ \\
        Hashing & $\mathcal{O}(\lambda)$ & 1 & $\mathcal{O}(\lambda)$ \\
        \midrule
        \multicolumn{4}{c}{Transfer (Online)}\\
        \midrule
        Bit-wise xor & $\OOO(\lambda)$ & 1--2 & $\OOO(\lambda)$\\
        KDF & $\mathcal{O}(\lambda)^\star$ & 1 & $\mathcal{O}(\lambda)^\star$ \\
        \bottomrule
    \end{tabular}

    \smallskip
    \noindent$^\star$ depending on the design/implementation of $F_s(\cdot)$.
\end{table}

\subsubsection{Sender-side complexity}

Conversely to the Receiver, the Sender is assumed to be computationally capable of executing more intense operations with ease, namely executing the Tonelli-Shanks algorithm for evaluating QR root sets and applying the CRT for combining said roots.
The complexity of each individual function is summarised in Table \ref{tab:sender_cost} while an in-depth description follows:

\begin{description}
    \item[Square roots evaluation] Exploiting the knowledge of $p$ and $q$, the Sender can evaluate the roots of a QR by evaluating the roots individually in $\ZZ_p$ and $\ZZ_q$ and later combining them according to the CRT.
    In particular, the four roots of $r$ can be evaluated by applying the Tonelli--Shanks algorithm, hence resulting in a computational cost of $\mathcal{O}(\log p)) \sim \OOO(\sfrac\lambda2)$ operations each, while combining roots of size $\lambda$ again costs $\OOO(\lambda)$ operations (mainly to evaluate the Extended Euclidean Algorithm).
    Conversely, evaluating the four roots of $(-r)$ benefits the pre-evaluation of $I$, actually allowing saving most of the computation (only 2 multiplications and at most two subtractions are needed).
    However, in both cases, it is important to notice that multiplications, divisions, and squarings are involved, actually prompting a total cost of $\OOO(4\lambda^3)$ operations for both cases.

    \item[Random choices] The Sender is required to uniformly sample the nonce $s$ (ring element), requiring $\OOO(\lambda)$ random bit generations. This is the only operation the Sender can anticipate in an offline phase.

    \item[ciphertext evaluations] Building ciphertexts requires the same set of operations already described for the Receiver decryption, all caveats of the case included. However, the Sender is required to process four different ciphertexts, including four evaluations of $F_s(\cdot)$, hence assessable as $\OOO(4\lambda)$ in total.

    \item[Digests evaluation] The Sender is required to evaluate the digests $d_i^{(j)}$ for the four plausible roots, actually incurring in four evaluations of $H(\cdot)$, \ie $\OOO(4\lambda)$ bit operations.
\end{description}

\begin{table}[!t]
\caption{Transfer phase (Offline and Online) computational cost in bit-operations (Sender-side).}
\label{tab:sender_cost}
    \centering
    \footnotesize
    \def\arraystretch{1.2}
    \setlength{\tabcolsep}{8pt}
    \begin{tabular}{c|c|c|c}
        \toprule
        \textbf{Operation} & \textbf{Cost per op} & \textbf{Number of op} & \textbf{Total cost} \\
        \midrule
        \multicolumn{4}{c}{Pre-computation (Offline)}\\
        \midrule
        Bit generation & $\OOO(1)$ & $\OOO(\lambda)$ & $\OOO(\lambda)$\\
        \midrule
        \multicolumn{4}{c}{Transfer (Online)}\\
        \midrule
        \textbf{Tonelli-Shanks} & $\mathcal{O}(\lambda^3)$ & 2 & $\mathcal{O}(2\lambda^3)$\\
        \textbf{CRT} & $\mathcal{O}(\lambda^3)$ & 4 & $\mathcal{O}(4\lambda^3)$ \\
        Modular multiplication & $\mathcal{O}(\lambda^2)$ & 2 & $\mathcal{O}(2\lambda^2)$ \\
        KDF & $\mathcal{O}(\lambda)^\star$ & 4 & $\OOO(4\lambda)^\star$ \\
        Bit-wise xor & $\mathcal{O}(\lambda)$ & 4 & $\OOO(4\lambda)$ \\
        Hashing & $\mathcal{O}(\lambda)$ & 4 & $\OOO(4\lambda)$ \\
        \bottomrule
    \end{tabular}

    \smallskip
    \noindent$^\star$ depending on the design/implementation of $F_s(\cdot)$.
\end{table}

\subsubsection{Communication cost}

In IoT applications, particularly in the oblivious transfer context, minimising communication costs is crucial to ensure energy efficiency, reduce latency, and enable reliable operation in resource-constrained or bandwidth-limited networks.
For this reason, the protocol was designed to reduce as much as possible the message exchange between the Sender and the Receiver.
In particular, here a single transfer requires the Receiver communicating the quadratic residue $r$ and the Sender responding with the KDF key $s$, both being elements of $\ZZ_n$ in our implementation.
Apart from this initial exchange, the other sole transfer consists of two ciphertexts per message along with the digest of the four corresponding keys.
If needed, digests communication can be circumvented by solutions like the one proposed in Footnote~\ref{fn:replaceHashWithMask};
however, in practical implementations, digests are typically small in size (say 256 bits) if compared to modulus $n$ (say at least 2048 bits), hence making a single ring element larger than the 4 digests themselves.
The total communication complexity can be therefore bound with 2 ring elements and 4 messages (encrypted).


\section{Security of \IOT}
\label{sec:security-analysis}

In this section, we discuss the security of the proposed protocol and its obliviousness (see Section~\ref{ssec:system-model}) w.r.t.\ the threat models introduced in Section~\ref{ssec:threat-models} and summarised in Table~\ref{tab:threat_models}.
In particular, we structure the discussion considering one attacking entity at a time, namely the semi-trusted Sender, the malicious Receiver, the passive attacker, and the active attacker.
Finally, we consider the case of a malicious Sender and we show how the Receiver can play a game to check whether the Sender is acting honestly.


\subsection{Security Analysis}\label{ssec:securityAnalysis}
We now proceed to discuss the security of our model against the threats introduced in the previous section.
In particular, we analyse the security of our solution from the perspective of one involved entity at a time.

\subsubsection{(s) Honest-but-curious Sender}
\label{ssec:sender-security}

We start by observing that breaking the obliviousness of the Receiver's choice would require the Sender to determine whether the value $r$ received during the transfer phase is of the form $r = k^2$ or $r = -k^2$ for some $k \in \ZZ_n$.
Since $r$ is the sole value communicated by the Receiver to the Sender, any such attack must rely exclusively on distinguishing quadratic residues of the form $k^2$ from those of the form $-k^2$.

However, since the modulus $n = pq$ is generated with $p \equiv q \equiv 1 \pmod 4$, it follows that $-1 \in \QR_n$.
Consequently, both $k^2$ and $-k^2$ are quadratic residues modulo $n$.
Moreover, since $\gcd(k,n)=1$, each of the values $\pm k^2$ admits exactly four square roots modulo $n$, two of which are positive.
In particular, all four positive roots $\{k_i^{(j)}\}$ of $\pm k^2$ are equally likely and induce the same multiset $\{k^2, -k^2\}$, making the Receiver's choice bit $b$ (that determines which element of the pair is transmitted) perfectly hidden.
Since $k$ is sampled uniformly at random from $\ZZ_n$, from the Sender's perspective the two possible choices for $b$ are therefore equiprobable.
We note that pathological cases in which $\gcd(k,n) \neq 1$ would reduce the number of square roots and could, in principle, leak information; however, such events occur with negligible probability and would otherwise imply an accidental factorisation of $n$.

These intuitions, which can be formalised via an indistinguishability game as presented in Appendix~\ref{app:security-games}, show that threat \textbf{(s-i)} is prevented by design and that choice obliviousness holds unconditionally, provided the setup phase is executed honestly.

\subsubsection{(c) Honest-but-curious Receiver(s)}
\label{ssec:receiver-security}

We now consider the security of the protocol against an honest-but-curious Receiver.
As such, the Receiver follows the protocol specification but attempts to extract additional information from the received messages, beyond the intended output $M_b$.

We recall that each transfer consists of the Receiver selecting a session key $k \in \ZZ_n$ and a choice bit $b \in \{0,1\}$, and transmitting the value $r = (-1)^b k^2 \bmod n$ to the Sender.
In response, the Sender provides four ciphertexts $\{c_i^{(j)}\}$, encrypted under four distinct message keys $\{K_i^{(j)}\}$ derived from the square roots $\{k_i^{(j)}\}$ of $\pm r$ via a keyed KDF $F_s(\cdot)$, where the KDF key $s \sample \ZZ_n$ is freshly and uniformly chosen by the Sender.
Additionally, the Sender provides digests $\{d_i^{(j)}\}$ that allow the Receiver to identify the unique index $\hat j$ such that $k_b^{(\hat j)} = k$, and hence to recover $M_b$.

Given knowledge of $k = k_b^{(\hat j)}$, computing the conjugate root $k_b^{(1-\hat j)}$ is known to be computationally equivalent to factoring $n$, since
\[
    \gcd\bigl(n, k_b^{(\hat j)} - k_b^{(1-\hat j)}\bigr)
\]
reveals a non-trivial factor of $n$ \cite{Rabin1981}.
Similarly, computing any root $k_{1-b}^{(j)}$ of $-r$ would imply evaluating $I = \sqrt{-1} \bmod n$, which is hard under the same assumption.
Therefore, deriving any additional root from $(k,r)$ is computationally infeasible.

Under the security properties of the KDF $F_s(\cdot)$, each key $K_i^{(j)}$ is computationally indistinguishable from random unless the corresponding root $k_i^{(j)}$ is known.
Consequently, all ciphertexts $c_i^{(j)}$ except the one indexed by $(b,\hat j)$ are indistinguishable from encryptions under random keys.
Furthermore, even the knowledge of the plaintext $M_b$, partial knowledge of the plaintext space, or access to a decrypting oracle (which would allow recovering all $K_i^{(j)}$) reveals no information about the underlying roots due to the pre-image resistance of $F_s(\cdot)$.
The same argument applies to the digests $\{d_i^{(j)}\}$, which, under standard hash-function security assumptions, do not enable recovery, testing, or correlation of unknown roots in the absence of the corresponding root.
Here, in fact, we stress that cryptographic hiding and key indistinguishability rely exclusively on the $F_s(\cdot)$, while the hash function $H$ is used solely as an identification mechanism.

These observations, which can be formalised via the message obliviousness game presented in Appendix~\ref{app:security-games}, guarantee primitive security \textbf{(c-i)} and message obliviousness \textbf{(c-ii)} for a single transfer.

If multiple transfers are considered under the same KDF key $s$, the security properties of $F_s(\cdot)$ guarantee that all additional ciphertexts and digests remain computationally indistinguishable from random for the Receiver.
If different KDF keys are used across transfers, the independence of the KDF family $\{F_u(\cdot)\}_{u\in\KKK}$ ensures that no combination of prior information yields a non-negligible advantage on any new transfer.

Consequently, even when allowing adaptive access to previous ciphertexts and plaintexts, an honest-but-curious Receiver cannot exploit prior queries to create valid ciphertexts or infer unknown roots. Generating a valid ciphertext would require either computing unknown roots from $k$ and $r$ (equivalent to factoring $n$) or breaking the KDF security. Therefore, the protocol achieves transfer-level non-malleable CCA security (tNM-CCA).

Finally, collusion of polynomially many Receivers \textbf{(c-iii)} does not provide any additional advantage beyond what a single Receiver could achieve by executing the same number of transfers individually.
When each Receiver receive independent messages, collusion yields no new information.
If identical messages are used across users, the system only reveals what is inherently observable from the choice selection and does not compromise the protocol's obliviousness guarantees.

\subsubsection{(c-u) Untrusted Receiver(s)}

Let us now consider the possible enhanced behaviours of a malicious Receiver.
As already stated for (c), a malicious Receiver can influence the protocol only slightly.
In fact, its active role involves the transfer of the sole value $r$, which can be: (i) random valid (\ie a value obtained by squaring a random number); (ii) chosen valid (\ie obtained from a chosen $k$); or, (iii) possibly invalid \ie $r$ random or chosen.
The first case corresponds to not modifying the scheme, hence there is nothing to prove as it reduces to the honest Receiver case already analysed.
In the second case, the Receiver can choose $k$ to try achieving an advantage amongst different transmissions, \eg sending multiple times the same $r$ or trying to exploit a combination of multiple $k$s; however, the Sender entrusts the random choice of the nonce $s$ which will be different for each transmission; as a consequence, for each transfer a given key $k$ is sent to a different randomly-picked value $K$,
ensuring session independence of the derived keys, so that combining multiple protocol executions yields no additional information.
Hence, no advantages can be gained this way as well.
Finally, if $r$ is chosen without regard of $k$ (either randomly or not), then the Receiver loses its ability to know a root, and then it is unable to decrypt both messages; moreover, with a probability of $\sfrac34$, a random value for $r$ yields a non-quadratic residue, making the Sender aware of the tampering.
This proves the ineffectiveness of malicious Receiver behaviours as well.

\subsubsection{(p) Passive attacker}

The security under a passive attacker is directly guaranteed by the previous security statements:
since the Sender learns strictly more information than a passive attacker, any attack by the latter would contradict Sender obliviousness.
In fact, with the sole knowledge of $r$, $\{c_i^{(j)}\}$, $\{d_i^{(j)}\}$ and $s$, the attacker is neither able to retrieve the roots of $r$ to decrypt the messages (due to the hardness of the square root problem), nor to decide on the choice of $b$ (as the more powerful/informed Sender is unable as well).
Furthermore, as discussed for the Receiver, the combination of transfers is not exploitable as well, hence granting security under the passive attacker \textbf{(p-i)}.

\subsubsection{(a) Active attacker}

As already discussed, Receiver authentication is not addressed by the protocol, as no mechanism is provided to bind a transfer to a specific Receiver identity.
Consequently, Receiver spoofing (\textbf{a-iii}) is outside the threat model and can only be mitigated by external authentication mechanisms.
Such mechanisms are orthogonal to the obliviousness guarantees of the protocol and can be realised, for instance, at the transport layer or via standard public-key authentication, without impacting the protocol design.

Concurrently, Sender spoofing during the transfer phase (\textbf{a-ii}) is computationally infeasible.
In order to produce a ciphertext that the Receiver would successfully accept (and hence decrypt), an attacker must derive the specific root $k$ selected by the Receiver in order to evaluate both $H(k)$ and $F_s(k)$.
However, given only $r$, the Receiver's key $k$ is computationally hidden among the four square roots of $\pm r$, and identifying it without the factorisation of $n$ is equivalent to factoring $n$ itself.
Therefore, forging valid Sender responses is as hard as breaking the underlying quadratic residuosity or factorisation assumptions.

Finally, consider an attacker acting as a man-in-the-middle, capable of intercepting, modifying, or dropping messages.
To succeed without detection, the attacker would need to alter protocol messages while preserving Receiver decryptability.
However, any modification of the value $r$ results in a mismatch between the Receiver's secret key $k$ and the roots used to generate ciphertexts, making successful decryption impossible.
As in the case of Sender spoofing, producing consistent ciphertexts would require recovering the Receiver's root $k$, which is computationally infeasible.
Therefore, the protocol resists active tampering and man-in-the-middle attacks at the transfer level, showing the security under \textbf{(a-i)} and concluding the discussion.


\subsection{Working with untrusted senders}\label{ssec:untrustedServer}

We conclude the security discussion by considering the case of an untrusted Sender.

In particular, for what concerns the transfer phase, behaving maliciously does not make the Sender achieve any advantage over the Receiver. In fact, the Receiver choices ($k$ and $b$) are independent of the Sender actions and occur prior to any interaction, hence being not misdirectable by the actions of the Sender.
Furthermore, even in the case of a protocol corruption during the generation of the ciphertexts/digests couples, the Sender gains no significant advantage apart from being able to send the same message twice; however, the latter behaviour is not related to the malicious execution of the protocol but rather on the malicious input itself and it is usually solved with the potential check reveal phase, as already discussed.
It is also rather clear that the Sender can generate a valid encryption for, say, $m_0$ and an invalid one for $m_1$; then, if $b=0$ the tampering will be transparent to the Receiver, while if $b=1$ the Receiver will be unable to decrypt $m_1$, hence potentially revealing $b$ to the Sender. However, this behaviour is commonly accepted in literature as the Receiver becomes aware of the malicious conduct with probability $\sfrac 12$.

Conversely, behaving maliciously during the setup phase leads to much more interesting outcomes.
In particular, choosing $p, q \equiv_4 3$ leads to catastrophic security results, allowing the Sender to break the obliviousness of the Receiver's choice with ease.
In fact, it is well known that if $p, q \not\equiv_4 1$ then ${-1} \not \in \QR$, hence $-k^2 \not\in \QR$.
As a consequence, the Sender is simply required to check whether $r \in \QR$ (and hence $b=0$) or $r \not\in \QR$ (hence $-r \in \QR$ and $b=1$) and can still produce a valid encryption for $M_b$; clearly no valid encryptions are allowed for $M_{1-b}$ in this case, but the Receiver remains unaware of the tampering until a Check Reveal is issued.

To mitigate this problem, a simple yet clearly sub-optimal solution is requiring a trusted third party (TTP) to generate (or at least validate) $p$ and $q$.
Despite being effective and completely resolving all the related security issues, this solution misrepresents the scope and nature of OT protocols, which are typically run in distributed environments where no trust is enforced by any entity.
Hence, we propose a second solution (TTP-free) relying on a probabilistic game that the Receiver can run to check whether the Sender generated a valid $n$ or not.
It is worth noticing that this solution is novel and interesting \textit{per se}, as it can be generalised with ease to test most modularity properties of the factors involved in the creation of $n$ (however, this is out of scope of the present work, as it deserves proper discussion on its own).
Furthermore, multiple receivers can potentially collaborate in the game to reduce the moderate burden required.
The game is presented in Figure~\ref{fig:checkcongruency} and described as follows under the name of \textit{Check Congruency}.

\begin{figure}[!t]
    \centering
    \def\arraystretch{1.5}
    \small
    \begin{tabular}{|L{5.5cm}C{2cm}L{5.5cm}|}
    	\hline
        \textbf{Sender} & \textbf{Public} & \textbf{Receiver}\\
        \hline
        &&$X \sample \ZZ_n^\ell$ \textit{s.t.} $\forall x \in X \ \gcd(x, n) = 1$\\
        &&\textbf{for} $x \in X$ \textbf{repeat:}\\
        &&$\vert$\phantom{xx}$c \sample \{1, 2\}$\\
        $y$&$\xmapsfrom[\phantom{xxx}]{y}$&$\vert$\phantom{xx}$y = -x^c \mod n$\\
        $r = \texttt{isQR}(y)$ & $\xmapsto[\phantom{xxx}]{r}$ & $\vert$\phantom{xx}$r$\\
        &&$\vert$\phantom{xx}\textbf{if} $c = 1$\textbf{:}\\
        &&$\vert$\phantom{xx}$\vert$\phantom{xx}\textbf{save} $r$ \textbf{in} $R$\\
        &&$\vert$\phantom{xx}\textbf{elif} $r=\texttt{T}$\textbf{:}\\
        &&$\vert$\phantom{xx}$\vert$\phantom{xx}\textbf{continue}\\
        &&$\vert$\phantom{xx}\textbf{else:}\\
        &&$\vert$\phantom{xx}$\vert$\phantom{xx}\textbf{Stop!}\\
        &&\textbf{check that} \phantom{xx}$R \sim \begin{cases}\texttt{T} & \textit{w.p.}\ \sfrac14\\\texttt{F} & \textit{w.p.}\ \sfrac34\end{cases}$\\[1em]
        \hline
    \end{tabular}
    \caption{Check-congruency phase of the oblivious transfer.}
    \label{fig:checkcongruency}
\end{figure}

\subsubsection{Check congruency phase}

During the Check Congruency phase (after setup, before transfers), the Receiver challenges the Sender to prove that $-1 \in \QR$ by asking to distinguish between negative squares $-x^2$, which should be quadratic residues with probability 1 (and are non-residues if congruency fails), and random numbers $x$, which can be either residue or not with probability respectively $\sfrac14$ and $\sfrac34$.

In detail, $\ell$ numbers are randomly picked and potentially squared with probability $\sfrac 12$ (or any other probability known to the Receiver solely), \ie a random variable exponent $c$ is sampled from $\{1, 2\}$.
Then, the resulting values are multiplied by $-1$ and transmitted to the Sender which is required to answer whether those numbers are quadratic residues or not.
For each value $x \in \ZZ_n$, if $x$ is squared ($c=2$), then a negative response yields a proof of guilty as $-x^2 \not \in \QR$ if and only if at least one factor of $n$, say $p$, is such that $p \not\equiv_4 1$ (a positive response gives no further hints and can be discarded); conversely, if $x$ was not squared ($c=1$), then the response is stored in a response set $R$ in order to check whether is distributed accordingly the expected distribution of $\sfrac34$ negative, $\sfrac14$ positive, \ie $R\sim$ Bernoulli($\sfrac14$) or
\[
    R \sim \begin{Bmatrix}
        \texttt{T} & \mbox{with prob} & \sfrac14\\
        \texttt{F} & \mbox{with prob} & \sfrac34
    \end{Bmatrix} \ .
\]

We proved (see Appendix~\ref{app:untrusted-sender}) that the closest response a malicious Sender can provide leads to $R \sim$ Bernoulli($\sfrac34$), actually prompting that the Receiver can decide whether the Sender is behaving maliciously with probability ($>1-10^{-6}$) when $|R|>75$ (\ie $\ell \sim 215$) by simply checking whether $|\{r = \texttt{T} \mid r \in R\}| < \sfrac \ell2$.
The interested reader can find in Appendix~\ref{app:untrusted-sender} a complete discussion on this topic, including the proof for the best strategy a malicious Sender can apply and all the probabilistic estimates formalised under both classical and Bayesian approaches.

It is worth noticing that the computational cost of a single test is even lighter than a transfer since it does not require evaluating either hashes or square roots.
From the Receiver's perspective, the check congruency phase only involves $\ell$ random generation and (on average) $\sfrac \ell2$ modular squarings, making it feasible even with a few hundred tests.
Furthermore, communication-wise, the entire phase only requires the transmission of $\ell$ elements of $\ZZ_n$ and $\ell$ booleans that can be packed together with ease.
Finally, it is also possible for mutually trusted receivers to collaborate in generating and sending test values and hence assessing the trustworthiness of the Sender.


\section{\IOT Proof-of-concept}
\label{sec:experiments}

In this section, we present an empirical evaluation of the proposed \IOT protocol. We first outline the hardware and software choices, as well as the network conditions under which our proof-of-concept (PoC) implementation was tested. We then report the experimental results, focusing on protocol performance, correctness, and suitability for constrained IoT environments.


\subsection{The experimental setting}
\label{ssec:experimental-setting}

Our experimental setup involves two devices: an IoT embedded platform and a desktop general-purpose system. For the IoT configuration, we executed the protocol on a Raspberry Pi Zero 2W, equipped with a 1.0 GHz CPU and 512 MB of RAM, running the standard 32-bit Raspbian operating system.

For the desktop configuration, we employed a Dell XPS 15 9530 laptop running Ubuntu 24.04.1 LTS. The system was equipped with an Intel Core i9-13900H processor clocked at $\sim 3$ Ghz, 32 GB of RAM, and a 2 TB SSD. To ensure stable and reproducible performance measurements, the laptop remained connected to a power supply throughout the experiments and was not subjected to any additional compute-intensive workloads.

We conducted two separate experiments, one per device, in which both the Sender and the Receiver defined by our protocol were executed on the same machine. The two entities communicated logically in a simulated Receiver--Sender setting. Network traffic was not routed through an external router, as our evaluation focuses exclusively on protocol-level performance, specifically execution time and CPU cycle consumption.


\subsection{Results}
\label{ssec:experimental-results}

\begin{table*}[p]
    \centering
    \caption{Performance comparison (timings and cycles) between IoT- and Desktop-like devices across various key sizes.}
    \label{tab:results}
    \footnotesize
    \def\arraystretch{1.2}
    \setlength{\tabcolsep}{4.5pt}
    \rotatebox{90}{
        \begin{tabular}{c|l||cccc|cccc|cccc}
        \toprule
        \multicolumn{2}{c||}{} & \multicolumn{4}{c|}{Timings Desktop [\textmu{}s]} & \multicolumn{4}{c|}{Timings IoT [\textmu{}s]} & \multicolumn{4}{c}{Cycles [kilocycles, eval on Rpi]} \\
        \multicolumn{2}{c||}{\textbf{Key Size} $\lambda$} & 1024 & 2048 & 3072 & 4096 & 1024 & 2048 & 3072 & 4096 & 1024 & 2048 & 3072 & 4096 \\
        \multicolumn{2}{c||}{\textbf{Security} $\kappa$} & 80 & 112 & 128 & 150$^{\ref{fn:security-4096}}$ & 80 & 112 & 128 & 150$^{\ref{fn:security-4096}}$ & 80 & 112 & 128 & 150$^{\ref{fn:security-4096}}$ \\
        \midrule
        \midrule
        \multirow{5}{*}{ \rotatebox{90}{ \makecell{ \textbf{Receiver}\\\textbf{offline} } } }
         & mean & 3.38 & 6.26 & 10.32 & 15.95 & 91.40 & 145.00 & 214.63 & 283.93 & 91.19 & 144.79 & 214.37 & 283.71 \\
         & median & 2.89 & 5.68 & 9.71 & 15.49 & 89.14 & 140.86 & 205.91 & 276.04 & 88.96 & 140.60 & 205.68 & 275.78 \\
         & max & 10.67 & 16.88 & 22.63 & 31.41 & 159.22 & 202.92 & 345.62 & 365.83 & 159.06 & 202.76 & 345.37 & 365.57 \\
         & min & 2.43 & 4.21 & 6.93 & 9.59 & 70.83 & 119.43 & 193.12 & 268.12 & 70.62 & 119.27 & 192.97 & 267.92 \\
         & std & 1.45 & 2.12 & 2.80 & 3.69 & 15.73 & 13.96 & 23.63 & 18.81 & 15.73 & 13.96 & 23.64 & 18.81 \\
        \midrule
        \multirow{5}{*}{ \rotatebox{90}{ \makecell{ \textbf{Receiver}\\\textbf{online} } } }
         & mean & 1.05 & 1.90 & 2.80 & 3.83 & 21.70 & 30.43 & 39.90 & 49.37 & 21.50 & 30.22 & 39.69 & 49.14 \\
         & median & 1.03 & 1.88 & 2.78 & 3.82 & 21.51 & 30.42 & 39.77 & 49.17 & 21.30 & 30.16 & 39.58 & 48.96 \\
         & max & 1.58 & 2.27 & 3.32 & 4.39 & 36.20 & 31.93 & 43.33 & 51.88 & 35.99 & 31.72 & 43.12 & 51.67 \\
         & min & 0.96 & 1.77 & 2.60 & 3.51 & 21.04 & 29.95 & 39.11 & 48.54 & 20.83 & 29.74 & 38.91 & 48.33 \\
         & std & 0.09 & 0.08 & 0.11 & 0.19 & 1.53 & 0.34 & 0.68 & 0.67 & 1.53 & 0.34 & 0.68 & 0.67 \\
        \midrule
        \multirow{5}{*}{ \rotatebox{90}{ \makecell{ \textbf{Sender} } } }
         & mean & 252.84 & 967.49 & 2350.93 & 5026.96 & 5502.52 & 2.08e+04 & 5.49e+04 & 1.14e+05 & 5502.21 & 2.08e+04 & 5.49e+04 & 1.14e+05 \\
         & median & 254.56 & 963.79 & 2398.10 & 5064.36 & 5486.07 & 2.10e+04 & 5.50e+04 & 1.15e+05 & 5485.81 & 2.10e+04 & 5.50e+04 & 1.15e+05 \\
         & max & 295.50 & 1202.89 & 2961.89 & 6635.94 & 6411.25 & 2.57e+04 & 6.85e+04 & 1.44e+05 & 6410.83 & 2.57e+04 & 6.85e+04 & 1.44e+05 \\
         & min & 206.98 & 779.61 & 1882.48 & 3734.19 & 4612.29 & 1.67e+04 & 4.19e+04 & 8.61e+04 & 4611.98 & 1.67e+04 & 4.19e+04 & 8.61e+04 \\
         & std & 24.24 & 126.01 & 346.33 & 827.79 & 549.78 & 2757.27 & 9148.81 & 2.03e+04 & 549.78 & 2757.27 & 9148.81 & 2.03e+04 \\
        \midrule
        \midrule
        \multirow{5}{*}{ \rotatebox{90}{ \makecell{ \textbf{Total}\\\textbf{transfer} } } }
         & mean & 258.45 & 977.28 & 2366.16 & 5049.07 & 5645.85 & 2.10e+04 & 5.52e+04 & 1.15e+05 & 5645.57 & 2.10e+04 & 5.52e+04 & 1.15e+05 \\
         & median & 260.10 & 973.34 & 2413.22 & 5081.51 & 5620.96 & 2.12e+04 & 5.53e+04 & 1.15e+05 & 5620.70 & 2.12e+04 & 5.53e+04 & 1.15e+05 \\
         & max & 300.44 & 1214.02 & 2976.93 & 6656.60 & 6540.52 & 2.59e+04 & 6.88e+04 & 1.45e+05 & 6540.21 & 2.59e+04 & 6.88e+04 & 1.45e+05 \\
         & min & 211.00 & 789.23 & 1897.70 & 3755.69 & 4740.26 & 1.69e+04 & 4.22e+04 & 8.65e+04 & 4739.95 & 1.69e+04 & 4.22e+04 & 8.65e+04 \\
         & std & 24.59 & 125.76 & 345.09 & 827.11 & 544.74 & 2760.59 & 9142.22 & 2.03e+04 & 544.74 & 2760.58 & 9142.22 & 2.03e+04 \\
        \bottomrule
        \end{tabular}
    }
\end{table*}

In what follows, we discuss the results obtained during our test campaign.
To recreate reasonably realistic conditions, we configured the Sender part to generate prime factors as RSA keys using OpenSSL.
This ensures that typical security requirements are met and that the results are obtained using realistic modulus sizes and values.
Both Python and C implementations of OpenSSL allowed us to run tests for $\lambda \in \{1024, 2048, 3072, 4096\}$ corresponding to $\kappa \in \{80, 82, 128, 150\footnote{The estimation between 4096-bit RSA and 150-bit of symmetric security is proposed for comparison in \cite{nist80057pt1comments2005}. We report it for comparability as 4096-bit RSA keys are still widely adopted in practice.\label{fn:security-4096}}\}$, as values below 1024 are no longer considered secure and are therefore not supported.

For each transfer, we independently generated two random messages of size $\lambda$ (elements of $\mathbb{Z}_n$, for ease of simplicity) on the Sender side and a choice bit on the Receiver side, then executed the benchmark synchronously.

We considered the processing time as offline for the operations that the two parties can pre-compute, and online for the operations that shall be computed during active transfer (cf.\ Figure~\ref{fig:transfer}).

The average Sender-side processing time was primarily dominated by the composite square-root operation required by the protocol, whereas the Receiver overhead remained minimal. The latter mainly consisted of a single modular squaring operation --that can be computed offline-- and a one hash computation during the online phase (see Section~\ref{ssec:complexity-analysis}).
As expected, in all tests, the selected messages were always correctly retrieved.

As Table~\ref{tab:results} reports, the results from our experiments show that \IOT performs efficiently, with an average Receiver-side processing time of 39.90 \textmu{}s and 214.63 \textmu{}s respectively during the online and the offline phase to transfer a message using a 3072-bit key ($\kappa=128$) on the Raspberry Pi Zero~2W.
Furthermore, it can be observed that the computational load is overwhelmingly handled by the Sender, accounting for more than $99\%$ of the total processing time.
As the key size increases, the Sender's required processing time grows, yet it remains consistently below 7 ms, reaching a maximum peak of 6.85 ms in the worst-case scenario.
In contrast, the Receiver's processing time is only marginally affected by variations in the key size. On the Desktop device, the average computation time required to complete the online phase for $\lambda \in {1024, 2048, 3072, 4096}$ is 1.05, 1.90, 2.80, and 3.83 $\mu$s, respectively. The Receiver remains efficient even on IoT hardware, exhibiting computation times of 21.70, 30.43, 39.90, and 49.37 $\mu$s for the same set of operations. The results in terms of kilocycles (evaluated on the Raspberry Pi) are also reported in Figure~\ref{fig:receiver-kilocycles}.

\begin{figure}[!t]
    \centering
    \includegraphics[width=0.80\linewidth]{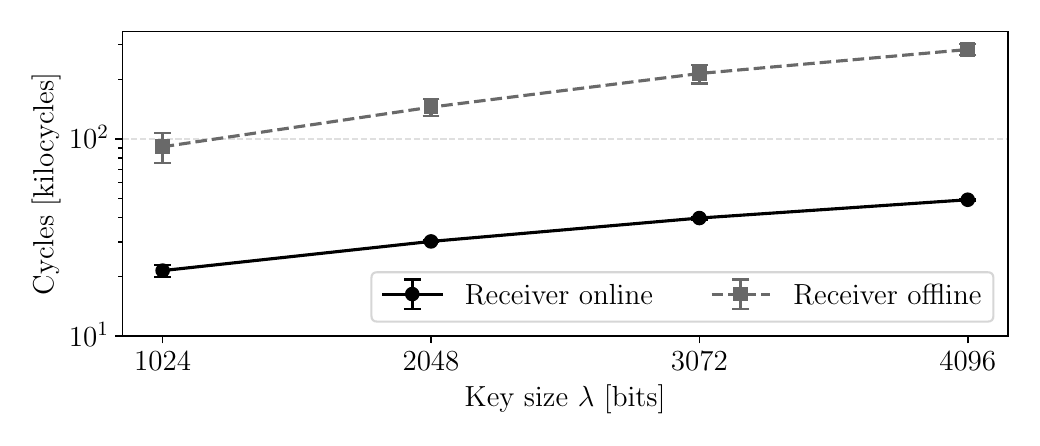}
    \caption{Performances (in kilocycles) achieved during receiver online and offline phases of the transfer.}
    \label{fig:receiver-kilocycles}
\end{figure}

In summary, considering the advantage of the pre-computation of the computationally expensive components required by the protocol --actually computed only once during the entire protocol life--regardless of the configuration, performance remains satisfactory for typical usage, confirming the feasibility of our \IOT protocol in low-power IoT environments.


\subsection{Comparison with SimplestOT}\label{ssec:simplestOT}

In order to provide a meaningful comparison with the state of the art, we decided to benchmark our approach against one of the most competitive (and widely adopted) references in the literature: SimplestOT protocol by Chou and Orlandi, originally presented in \cite{CO-OT15,CO-OT18}.
Also known as CO-OT, Simplest OT presents many similarities in design if compared to \IOT, due to its simplicity in definition (driven by Diffie-Hellman key exchange) and global efficiency (also due to the usage of ECC).
Furthermore, SimplestOT also might benefit from off-line pre-computation, despite not being exploited nor implemented in its original implementation: for this reason, we decided to make a porting of the original assembly code\footnote{The code is taken from \url{https://github.com/secretflow/simplest-ot} as the original link on the paper raise 404.} in C, implementing an online vs.\ offline benchmark.
Here, it is important to notice that our C porting clearly degraded the original performances claimed; however, results suggest our re-implementation remains competitive and valid, with a nearly negligible overhead of $2\times$ (cf.\ Table~\ref{tab:comparison_table}), expected under the change of language (as a reference, our Python implementation is three order of magnitude slower than the C implementation).

We present a detailed comparison of the two approaches in Figure~\ref{fig:timing-comparison}, which shows how \IOT protocol preserves superior efficiency on the Receiver's online phase, achieving 14$\times$ the performance on the IoT machine (25$\times$ resp.\ on the desktop machine).

Most notably, \IOT outperforms SimplestOT in the Receiver's online phase, even cross-platform, actually achieving lower execution time on the IoT device, even if compared with SimplestOT running on the desktop machine.
This result represents a significant step forward, as it clearly indicates that, in large-scale deployment scenarios, expensive, high-performance hardware can be potentially replaced with low-cost IoT devices, while still achieving better performance during OT execution.

Overall, these results validate the proposed design: the computational burden is effectively shifted to a well-resourced Sender --which can scale to higher performance depending on the setting-- ensuring lightweight requirements on the IoT Receiver.

\begin{figure}[!t]
    \centering
    \includegraphics[width=.80\linewidth]{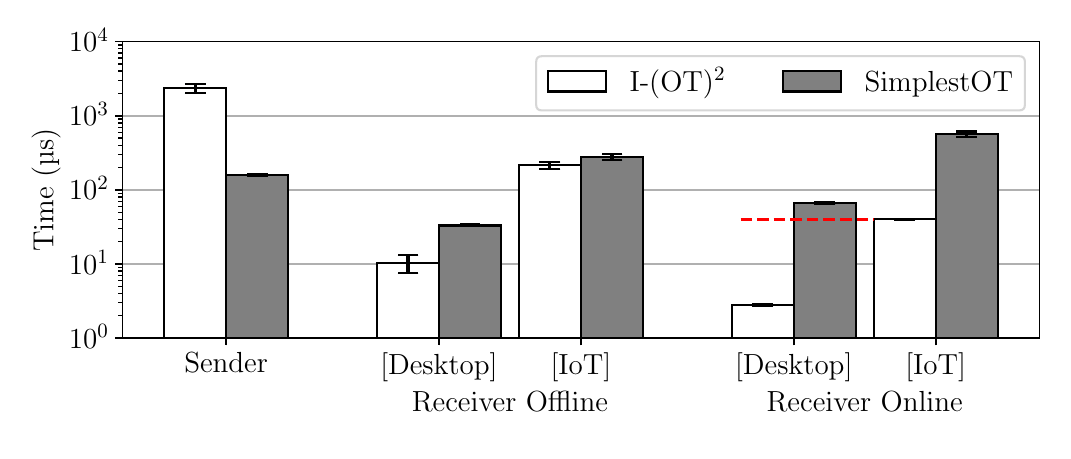}
    \caption{Performance (in \textmu{}s) comparison between our \IOT and our C porting of SimplestOT~\cite{CO-OT15}.}
    \label{fig:timing-comparison}
\end{figure}


\section{Conclusion and Future Work}
\label{sec:conclusions}

In this paper, we presented \IOT, a novel Oblivious Transfer protocol optimised for efficiency and security in client--server architectures, with a particular focus on resource-constrained environments such as IoT devices.
Our scheme substantially reduces Receiver-side computation and minimises communication overhead (achieving a $10\times$ performance improvement), while ensuring message confidentiality and non-malleability, thereby addressing key limitations of many existing OT constructions.
These design choices make \IOT particularly suited for contexts in which the number of required OTs is limited, and setup costs of extension protocols cannot be effectively amortised, favouring direct and lightweight base OT solutions.

We show the practical feasibility of \IOT through an open-source proof-of-concept implementation evaluated on both desktop- and IoT-like settings.
The results show extremely low receiver-side online computational cost, with execution times as low as 39.90~\textmu{}s for 128-bit security (3072-bit RSA modulus).
These findings confirm that \IOT is well-suited for real-world deployments requiring lightweight cryptographic operations, low interaction latency, and limited client capabilities.

Several directions remain open for further research.
First, while \IOT targets classical security settings, it is not secure
against quantum adversaries.
In this direction, we plan to explore post-quantum adaptations on lattice-based assumptions, such as (R-)LWE.
Second, extending the protocol to support batch or adaptive OT could further reduce overhead when multiple transfers are required.
Third, integrating \IOT within stronger composability frameworks, such as Universal Composability (UC)~\cite{Canetti2001UC}, would enable more comprehensive security guarantees.
Finally, at the implementation level, we aim at providing a more comprehensive framework to test our solution (and the major ones in the literature) in the context of extremely resource-constrained devices, like microcontrollers without OS support.


\bibliographystyle{elsarticle-num}
\bibliography{arXiv_biblio}


\appendix


\section{Probabilistic and Bayesian Analysis of the Check-Congruency Game}
\label{app:untrusted-sender}

In this appendix, we present a thorough probabilistic analysis of the Check-Congruency game introduced in Figure~\ref{fig:checkcongruency} and discussed in Section~\ref{ssec:untrustedServer}.
We start by reasoning on the Sender's optimal malicious strategy and we then derive the probabilistic estimates that justify the choice of $\ell$ in the protocol.
Finally, we conclude by presenting an alternative Bayesian formulation of the same problem that better captures the Receiver's confidence in the test result.

We recall the objective of the check congruency game being to allow the Receiver to verify whether the Sender generated a valid modulus $n$ (with $p\equiv_4 q \equiv_4 1$) or not, by challenging the Sender with random values and checking the distribution of the responses.
In details, during the Check-Congruency phase, the Receiver samples $\ell$ random values $x \in \ZZ_n$, which it then uses to challenge the Sender by sending $y = -x^c \bmod n$ for some random exponent $c \sample \{1, 2\}$.
The Sender is required to answer $r = \texttt{isQR}(y)$, \ie whether $y$ is a quadratic residue modulo $n$ or not.

Observe that, when the Sender is honest (\ie $p\equiv_4 q \equiv_4 1$), the response $r$ is fully determined by the protocol semantics.
Conversely, when the Sender is malicious, $r$ is no longer deterministic and can be partially controlled.
In this case, when $c = 2$, the value $-x^2$ is not a quadratic residue, but the Sender can attempt to deceive the Receiver by claiming it is.
However, when $c = 1$, the value $-x$ is a quadratic residue with probability $\sfrac14$ and the Sender cannot control the actual probability.

Let then $p_1, p_2 \in [0, 1]$ be the two probabilities $p_c = \PP[r = \texttt{T} \mid c]$ that the Sender produces in response to the Receiver's challenge.
The Receiver expects $p_1 = \sfrac14$ (the natural probability that a random element of $\ZZ_n$ is a quadratic residue) and $p_2 = 1$ (since the Sender knows the factorisation of $n$, it can always identify $-x^2$ as a quadratic residue).
The Sender, on the other hand, wants to deceive the Receiver by making $p_1$ as close as possible to $\sfrac14$ while also making $p_2$ as close as possible to $1$.
We now characterise the optimal strategy a malicious Sender can adopt, and the resulting values of $p_1, p_2$ that the Receiver observes.


\subsection{Best malicious sender strategy}

From a malicious Sender perspective (meaning $-1 \not\in\QR$) upon receiving $y = -x^c$, we have that
\[
    \PP[y \in \QR \mid c = 1] = \sfrac14
    \qquad,\qquad
    \PP[y \not\in \QR \mid c = 1] = \sfrac34
\]
and
\[
    \PP[y \in \QR \mid c = 2] = 0
    \qquad, \qquad
    \PP[y \not\in \QR \mid c = 2] = 1
    \ .
\]
Hence
\[
    \PP[y \in \QR] = \sfrac18
    \qquad, \qquad
    \PP[y \not\in \QR] = \sfrac78
    \ .
\]
As a consequence, it is straightforward to see that
\[
    \PP[c=1 \mid y \in \QR] = 1
    \qquad, \qquad
    \PP[c=2 \mid y \in \QR] = 0
\]
while, by applying Bayes theorem, we have
\[
    \PP[c = 1 \mid y \not\in \QR] = \frac 37
    \quad, \quad
    \PP[c = 2 \mid y \not\in \QR] = \frac47 \ .
\]
It follows that responding \texttt{T} with probability $\rho$ when $y \not \in \QR$ leads to a disclosure of the malevolent behaviour with probability $1 - \rho^{\sfrac47\ell}$.
It is easy to compute that, even for $\rho \sim 0.95$, the disclosure probability is higher than $\sfrac12$ as $\ell \geq 24$, and for $\rho > .95$ there is no sensible gain in risking the disclosure (later we estimate reasonable values of $\ell \sim 215$); hence the best strategy consists in setting $\rho = 1$, \ie responding \texttt{T} whenever $y \not \in \QR$.
It follows that, whenever $y \in \QR$, since $c = 1$ must hold, the response can safely be \texttt{F}.
This leads to the best strategy being
\[
    r = \begin{cases}
        \texttt{F} & \mbox{if } y \in \QR\\
        \texttt{T} & \mbox{otherwise}
    \end{cases} \ ,
\]
which yields $p_2 = \PP[r = \texttt{T} \mid c = 2] = 1$ and $p_1 = \PP[r = \texttt{T} \mid c = 1] = \sfrac34$, that is, $R$ being distributed like a Bernoulli($\sfrac34$) instead of the Bernoulli($\sfrac14$) expected from an honest Sender.

In what follows, we hence focus on the case $c = 1$, which is the only one that allows the Receiver to discriminate between honest and malicious Senders, since for $c=0$ the sender will always return a correct answer.
In particular, this can be done by checking the distribution of the responses against Bernoulli($\sfrac14$) $=D_0$ vs.\ Bernoulli($\sfrac34$) $=D_1$.

As usual, given $l$ consecutive i.i.d.\ samples, let $S_l$ be the number of positive ones sampled from the chosen distribution $D$.
Do note that here $l$ do represent the number of samples picked with $c=1$ and, since $c$ is uniformly sampled in $\{0,1\}$, then we typically require $\ell\sim 2\sqrt2\cdot l$ samples (it is easy to see that at least $l$ samples are picked with $c=1$ if $\ell = 2\sqrt2\cdot l$ when $l>40$).
Then, we have that
\[
    \PP[S_l = k \mid D_0] = \binom l k \left(\frac14\right)^k\left(\frac34\right)^{l-k}
\]
and
\[
    \PP[S_l = k \mid D_1] = \binom l k \left(\frac34\right)^k\left(\frac14\right)^{l-k}
\]
which can be approximated as a normal distribution $\mathcal N(\mu, \sigma^2)$ centred at $\mu_0=\sfrac l4$ (resp.\ $\mu_1 = \sfrac {3l}4$) with standard deviation $\sigma = \sqrt{\sfrac{3l}{16}}$ for large enough $l$;
a depiction of the corresponding PDFs for $l = 30$ and $l = 75$ can be found in Figure~\ref{fig:PDF-Sl} as a reference.

\begin{figure}[!t]
    \centering
    \includegraphics[width=.8\linewidth]{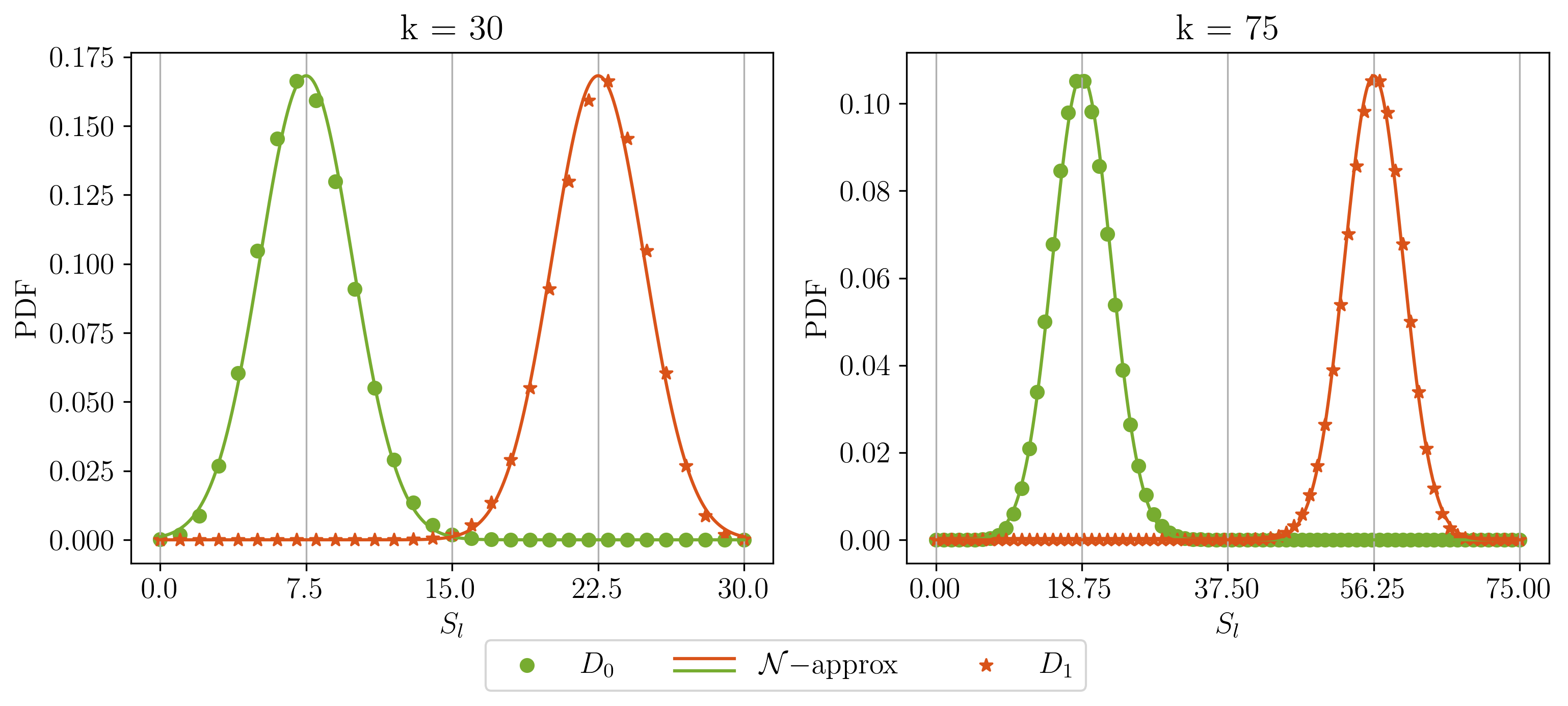}
    \caption{
        Probability Density Functions (PDFs) of elements sampled from the expected distribution $D_0$ (blue circles) and produced according to the best malicious strategy $D_1$ (orange stars) when the samples are (left panel) $k = 30$ and (right panel) $k=75$.
        The normal approximations of the two Bernoulli are represented for comparison as full lines.
        Do notice that $y$-scales are different being the domains different in size.
    }
    \label{fig:PDF-Sl}
\end{figure}

It can be noted that the two PDFs are centred in $\sfrac l4$ and $\sfrac{3l}4$ respectively, and are hence symmetrical one to the other w.r.t.\ $\sfrac l2$, suggesting that whenever $S_l > \sfrac l2$, then samples are drawn from $D_1$ with higher probability.
In particular, we have that
\[
    \PP[S_l > \sfrac l2 \mid D_0] = \sum_{k=\sfrac l2+1}^l \PP[S_l = k \mid D_0]
\]
or, by adopting the normal approximation, that
\[
    \PP[S_l > \sfrac l2 \mid D_0]
    = \PP\left[\frac{S_l - \mu}{\sigma} > \frac{\sfrac l2 - \mu}{\sigma} \right]
    = \PP\left[Z > \frac{\sfrac l4}{\sqrt{\sfrac{3l}{16}}} \right]
    = \PP\left[Z > \frac{\sqrt{l}}{\sqrt{3}} \right]
\]

with $Z$ being the standard normal distribution $\mathcal N(0, 1)$.
Hence, since $\PP[Z \geq 5] \approx 2.87 \times 10^{-7}$, we can confidently say that $\PP[S_l > \sfrac l2 \mid D_0]$ is negligible for $\sfrac{\sqrt{l}}{\sqrt{3}} \geq 5$, \ie $l \geq 75$.
Clearly, the same bound can also be obtained for the reverse proposition, meaning that $\PP[S_l < \sfrac l2 \mid D_1]$ is negligible when $l \geq 75$.
Collecting the two properties together, we can claim that ``the probability of the Sender deceiving the Receiver with confidence greater than $2.87 \times 10^{-7}$ is less than $2.87 \times 10^{-7}$ after $l = 75$ experiments''.
It follows that choosing $\ell = 2 \cdot \sqrt2 \cdot l \sim 215$ is way more than sufficient for the Receiver to be sure that the Sender is behaving honestly, see also Figure~\ref{fig:ERR-Sl}.

\begin{figure}[!t]
    \centering
    \includegraphics[width=.8\linewidth]{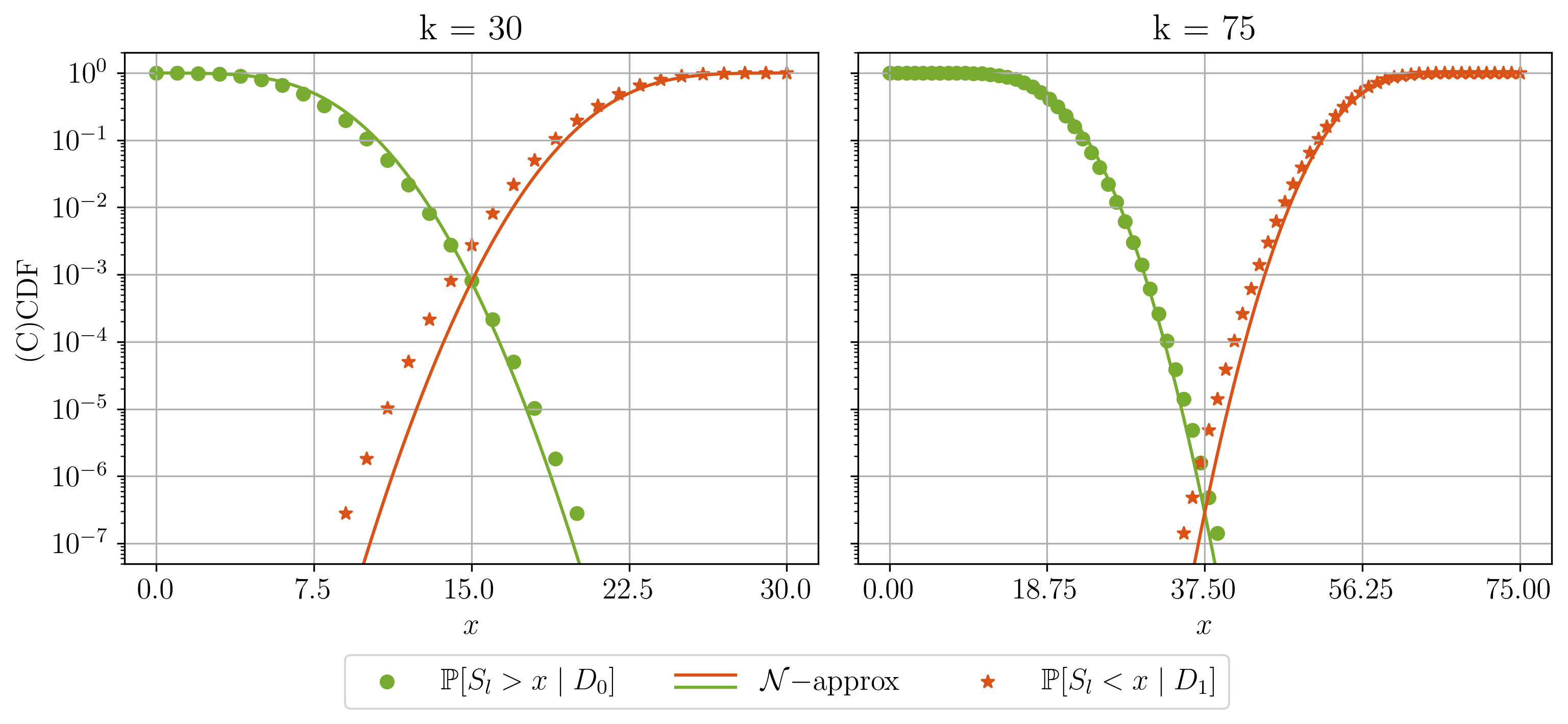}
    \caption{
        Probability of exceeding a given number $x$ of positive responses for the two distributions $D_0$ and $D_1$ from Figure~\ref{fig:PDF-Sl} when the number of samples is set to (left panel) $k = 30$ and (right panel) $k = 50$.
        $D_1$ (orange star) probabilities are modelled with the Cumulative Distribution Function (CDF), while values from $D_0$ (blue circles) are obtained from the survival function or Complementary Cumulative Distribution Function (CCDF).
        Normal approximations are represented as full lines for comparison.
        Do notice that perceivable (yet infinitesimal) numerical differences occur between exact and approximate values due to the inherent approximation of $D_0$ and $D_1$ being defined on discrete sets of values $x$.
    }
    \label{fig:ERR-Sl}
\end{figure}


\subsection{Bayesian approach to the effective size of the test set}

\begin{figure*}[!t]
    \centering
    \includegraphics[width=.99\linewidth]{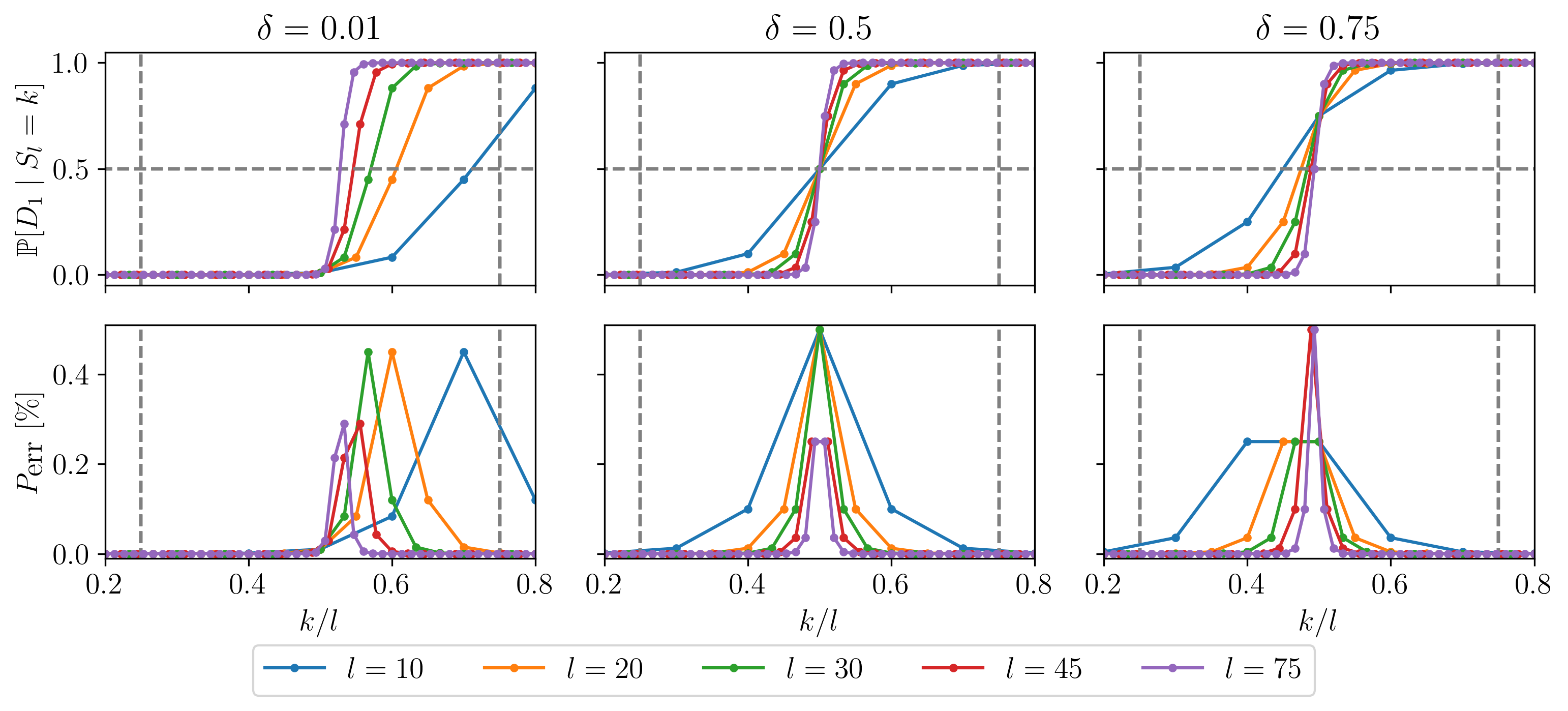}
    \caption{
        Bayesian analysis of the probability that the Sender is badly behaving w.r.t.\ the number of \textsc{t} samples obtained in $R$ (top panels) and corresponding uncertainty (bottom panels).
        Results are presented for three values of Receiver distrust $\delta \in \{\sfrac1{100}, \sfrac12, \sfrac34\}$ (from left to right).
        Do notice that the $x$-axis is 0-1 normalised to make plots at the varying of $l$ comparable and limited to $[\sfrac15, \sfrac45]$ for readability.
        Distribution means ($\sfrac12 \pm \sfrac14$) are highlighted by vertical dashed lines.
    }
    \label{fig:bayesian}
\end{figure*}

Alongside the classical probabilistic discussion, which involves the normal approximation of the Bernoulli and provides us with a suitable value for $\ell$, we can also discuss the confidence of any specific experiment result by adopting a Bayesian approach.
In particular, this gives us the opportunity to introduce a novel parameter $\delta \in [0, 1]$ representing the Receiver's distrust in the Sender's behaviour, where $\delta = 0$ implies the Sender is always trusted (honest-but-curious), $\delta = 1$ identifies a malevolent Sender who is always trying to cheat on the protocol, and $\delta=\sfrac12$ represents the unbiased situation where no \textit{a priori} information is available.
In the context of a Bayesian approach, we can shape such a value $\delta$ as the probability that $D_1$ was chosen rather than $D_0$, the so-called \textit{prior distribution}, \ie
\[
    \PP[D_0] = 1-\delta \qquad,\qquad \PP[D_1] = \delta\ .
\]

Given a specific experiment producing $k$ true responses out of $l$ tests, we can express the likelihood ratio of the Sender behaving maliciously as
\[
    \Lambda_l(k)
    = \frac{\PP[S_l = k \mid D_1]}{\PP[S_l = k \mid D_0]}
    = \frac{\binom lk \left(\frac34\right)^{k} \left(\frac14\right)^{l-k}}{\binom lk \left(\frac14\right)^{k}\left(\frac34\right)^{l-k}}
    = \left(\frac{\frac34}{\frac14}\right)^k
        \left(\frac{\frac14}{\frac34}\right)^{l-k}
        = \left(3\right)^k \left(\frac13\right)^{l-k}
        = 3^{2k-l}
\]
where it is interesting to notice that the log-likelihood ratio (LLR) is linear
\[
    \textsc{llr}_l(k) = \log(\Lambda_l(k))
    = \log(\PP[S_l = k \mid D_1]) - \log(\PP[S_l = k \mid D_0])
    = (2S_l - k) \log 3\ .
\]
Then, the probability that the Sender is behaving maliciously given the result of an experiment is:
\[
    \begin{split}
        \PP[D_1 | S_l = k] &
        = \frac{\PP[S_l = k \mid D_1] \cdot \PP[D_1]}{\PP[S_l = k]}
        = \frac{\PP[S_l = k \mid D_1]\cdot\delta}{\PP[S_l = k \mid D_1]\cdot\delta + \PP[S_l = k \mid D_0]\cdot(1-\delta)}
        \\&= \frac{\Lambda_l(k)\cdot\delta}{\Lambda_l(k)\cdot\delta + (1-\delta)}
        = \frac{1}{1 + \frac{(1-\delta)}{\delta}\Lambda_l(k)^{-1}}
        = \frac{1}{1 + \frac{(1-\delta)}{\delta}3^{l-2k}}
    \end{split}
\]
and analogously
\[
    \PP[D_0 \mid S_l = k] = \frac{1}{1+\frac{\delta}{1-\delta}\Lambda_l(k)} = \frac{1}{1+\frac{\delta}{1-\delta}3^{2k-l}}
\]
where we recall that $\frac{\delta}{1-\delta}3^{2k-l}$ is also referred to as posterior odds.

Finally, we can evaluate the uncertainty in the decision as the Bayesian decision error, \ie $P_\text{err} = \min(\PP[D_1 | S_l = k], \PP[D_0 | S_l = k])$, which evaluates to
\[
    P_\text{err} = \min\left(\frac{1}{1 + \frac{(1-\delta)}{\delta}3^{l-2k}}, \frac{1}{1+\frac{\delta}{1-\delta}3^{2k-l}}\right)
\]

Figure~\ref{fig:bayesian} reports the $\PP[D_1 | S_l = k]$ and the corresponding $P_\text{err}$ at the varying of the sample set size $|R| = l \in \{10, 20, 30, 45, 75\}$.
Results are presented for three example values of trustworthiness: strong Sender trust ($\delta = \sfrac{1}{100}$), no prior assumption ($\delta = \sfrac12)$, and small Sender distrust ($\delta = \sfrac34$).
In particular, do notice that, being the distribution discrete, then $P_\text{err}$ does not always reach the value of $\sfrac12$ as theoretically expected since it can happen that theoretical maximal uncertainty is located within two possible $k$ values: it is the case, \eg, when $l$ is odd and $\delta = \sfrac12$, as we can see for $l=75$; here the maximum is located in $\sfrac l2 = 37.5$, $\ie \sfrac kl = \sfrac12$, which is not an actual achievable number of \textsc{t} responses.


\section{Security Games}
\label{app:security-games}


\subsection{Choice Obliviousness Game}

The \emph{Choice Obliviousness Game} between a Receiver challenger $\CCC$ and a (possibly unbounded) Sender adversary $\AAA$ is defined as follows.

\begin{description}
    \item[Setup]
    The challenger generates an RSA modulus $n = pq$, where $p \equiv q \equiv 1 \pmod 4$, and sends $n$ to $\AAA$.

    \item[Query]
    $\AAA$ may request polynomially many honest executions of the protocol,
    obtaining the corresponding public values $r$ generated according to honest Receiver behaviour.
    These queries model full protocol transcripts available to the Sender.

    \item[Challenge]
    $\CCC$ samples a value $k \sample \ZZ_{\sfrac n2}$ and a challenge bit $b \sample \{0,1\}$ uniformly at random, and computes
    \(
        r = (-1)^b \cdot k^2 \bmod n \, .
    \)
    The value $r$ is given to $\AAA$.

    \item[Query]
    A second query phase is run as above, with fresh independent protocol executions.

    \item[Guess]
    The adversary $\AAA$ outputs a bit $b' \in \{0,1\}$.

    \item[Winning condition]
    The adversary wins if $b' = b$.
\end{description}

The advantage of $\AAA$ in the Choice Obliviousness Game is defined as
\[
    \Adv_{\mIOT}^{\mathsf{choice}}(\AAA)
    =
    \left| \Pr[b' = b] - \tfrac12 \right| \, .
\]

\begin{theorem}[Perfect Choice Obliviousness]
For any (possibly unbounded) Sender adversary $\AAA$, the advantage in the Choice Obliviousness Game is zero, \ie,
\[
    \Adv_{\mIOT}^{\mathsf{choice}}(\AAA) = 0 \, .
\]
\end{theorem}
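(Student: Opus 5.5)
The plan is to show that the view of $\AAA$ is \emph{identically distributed} for $b=0$ and $b=1$. The advantage of any, even unbounded, distinguisher is bounded by the statistical distance between the two views, so equal distributions give $\Adv_{\mIOT}^{\mathsf{choice}}(\AAA)=0$.

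First I reduce the game to the challenge value alone. The setup output $n$ does not depend on $b$, and neither do the transcripts from the two query phases: these are fresh honest executions with independent randomness. So the joint view has the form $(n,\text{queries},r)$, where only the conditional law of $r$ given everything else could depend on $b$. It therefore suffices to prove that, for every fixed honestly generated $n$, the random variables $k^2 \bmod n$ and $-k^2 \bmod n$ have the same distribution when $k$ is uniform over the positive representatives $\ZZ_{\sfrac n2}$. Here $n$ is odd, so every nonzero $a\in\ZZ_n$ pairs with $-a\neq a$, and exactly one element of each pair $\{a,-a\}$ is positive. I take $\ZZ_{\sfrac n2}$ to mean this set of positive representatives, possibly together with $0$.

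The key step is an explicit bijection. Because $p\equiv q\equiv 1 \pmod 4$, there exists $I\in\ZZ_n$ with $I^2\equiv -1$ (as recalled in Section~\ref{sec:IOT2}). Define $\varphi(k)$ to be the positive element of the pair $\{Ik,-Ik\}$. Multiplication by $I$ is a permutation of $\ZZ_n$, since $I$ is a unit ($I\cdot(-I)=1$). It also commutes with negation, so it permutes the pairs $\{a,-a\}$ and fixes $\{0\}$. Hence $\varphi$ is a bijection of $\ZZ_{\sfrac n2}$ onto itself. Moreover, $\varphi(k)^2=(\pm Ik)^2=I^2k^2=-k^2$. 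So if $k$ is uniform on $\ZZ_{\sfrac n2}$, then $k':=\varphi(k)$ is also uniform, and $-k^2 = k'^2$. This gives $\Pr[-k^2=x]=\Pr[k'^2=x]=\Pr[k^2=x]$ for every $x$, so the law of $r$ is the same under $b=0$ and $b=1$. Note that this argument needs no restriction on $\gcd(k,n)$, because $I$ is a unit and $\varphi$ permutes non-units as well. Consequently $\Pr[b'=b]=\tfrac12$ exactly, whatever strategy $\AAA$ uses.

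The main obstacle I expect is making the sampling domain precise, not the algebra. If an implementation additionally rejects $k$ with $\gcd(k,n)\neq 1$, the bijection still works, since $\varphi$ preserves units. If it rejects $k<\sqrt n$, however, $\varphi$ need not preserve that set, and exact equality would degrade to a statistical distance of at most $\Pr[k<\sqrt n]\approx 2/\sqrt n$. I would therefore state the theorem for uniform $k$ over $\ZZ_{\sfrac n2}$ exactly as the game specifies, with zero advantage, and add a remark that the extra filter costs only a negligible advantage.
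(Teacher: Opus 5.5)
Your proof is correct and follows the same route as the paper: the query transcripts are independent of $b$, and since $-k^2=(Ik)^2$ with multiplication by the unit $I$ a bijection, the challenge $r$ has the same law for $b=0$ and $b=1$. Your sign-normalised map $\varphi$ tightens the paper's step, which invokes bijectivity of $k\mapsto Ik$ on all of $\ZZ_n$ even though $k$ is drawn from $\ZZ_{\sfrac n2}$, a set that this map does not preserve.
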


\begin{proof}
    We begin the proof, noticing that the query phases provide the adversary only with independent samples drawn from the same distribution as the challenge value $r$, and therefore do not affect the indistinguishability argument below.

    For uniformly random $k \sample \ZZ_{\sfrac n2}$, the distributions of $k^2$ and $-k^2$ modulo $n$ are identical.
    Indeed, since $p \equiv q \equiv 1 \pmod 4$, there exists $I \in \ZZ_n$ such that $I^2 \equiv -1 \pmod n$, and thus
    \[
    (-1)\cdot k^2 \equiv (I \cdot k)^2 \pmod n \, .
    \]
    As multiplication by $I$ is a bijection over $\ZZ_n$, it follows that the random variables $k^2 \bmod n$ and $(I \cdot k)^2 \bmod n$ have identical distributions when $k$ is sampled uniformly from $\ZZ_{\sfrac n2}$.

    Therefore, the challenge value $r$ is identically distributed for $b=0$ and $b=1$.
    It follows that $\Pr[b' = b] = \tfrac12$ for any adversary $\AAA$, and hence
    \[
        \Adv_{\mIOT}^{\mathsf{choice}}(\AAA) = 0 \, .
    \]
\end{proof}


\subsection{Receiver Message Obliviousness Game}
Let $\AAA$ be a PPT honest-but-curious Receiver and let $\CCC$ be a challenger playing the role of the Sender.
We define the following game.

\begin{description}
    \item[Init]
    $\CCC$ generates and publishes the modulus $n = pq$, where $p \equiv q \equiv 1 \pmod{4}$.

    \item[Setup]
    $\CCC$ fixes the public description of the keyed KDF $F_\circ(\cdot)$ and the hash function $H(\cdot)$.

    \item[Query$_1$]
    $\AAA$ may adaptively request polynomially many honest executions of the protocol (under fresh $s$ values) by specifying arbitrary message pairs $(M_0, M_1)$.
    These queries model full protocol transcripts available to an honest-but-curious Receiver.

    \item[Challenge]
    $\AAA$ outputs two challenge messages $(M_0, M_1)$ of equal length, samples a fresh session key $k \in \ZZ_n$, selects a choice bit $b \in \{0,1\}$ (as it would regularly do it in the protocol), and sends $r = (-1)^{b} k^2 \bmod n$ to $\CCC$.

    The challenger samples a fresh KDF key $s^\star \sample \ZZ_n$ and sends it to $\AAA$.
    $\CCC$ then computes the four square roots $\{k_i^{(j)}\}$ of $\pm r$, derives the corresponding message keys $K_i^{(j)} = F_{s^\star}(k_i^{(j)})$, and forms the digests $\{d_i^{(j)}\}$ according to the protocol.

    For each pair $(i,j)$, the challenger samples an independent bit $\beta_i^{(j)} \sample \{0,1\}$ and replaces $K_i^{(j)}$ with a random value whenever $\beta_i^{(j)} = 0$, consequently creating and returning compliant or random cyphertexts to $\AAA$.

    \item[Query$_2$] A second query phase is run as before, under different fresh secrets $s \in \ZZ_n$.

    \item[Guess]
    $\AAA$ outputs guesses $\bar\beta_i^{(j)} \in \{0,1\}$ for all $(i,j) \neq (b,\hat j)$ and wins if
    \[
        \Pr\!\left[
        \bar\beta_i^{(j)} = \beta_i^{(j)}
        \right]
        \ge \frac{1}{2} + \varepsilon(\kappa)
        \quad
        \text{for some } (i,j) \neq (b,\hat j),
    \]
    where $\varepsilon(\kappa)$ is non-negligible.
\end{description}

\begin{theorem}[Receiver Message Security]

    Assume that factoring $n = pq$ is hard and that the keyed KDF $F_s(\cdot)$ and the Hash function $H(\cdot)$ are secure.
    Then, for any PPT honest-but-curious Receiver \(\AAA\), its advantage is defined as
    \[
    \Adv^{\mathsf{message}}_{\mIOT}(\AAA)
    = \max_{(i,j)\neq(b,\hat j)}
    \left|
    \Pr[\bar\beta_i^{(j)} = \beta_i^{(j)}] - \frac12
    \right|.
    \]
    We say that $\AAA$ wins if
    \[
        \Adv^{\mathsf{message}}_{\mIOT}(\AAA)
        \ge \varepsilon(\kappa)
    \]
    for some non-negligible $\varepsilon$.
\end{theorem}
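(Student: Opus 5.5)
The plan is to prove that $\Adv^{\mathsf{message}}_{\mIOT}(\AAA)$ is negligible by reduction. I would model $H$ and the KDF $F$ as random oracles. The KDF key $s^\star$ is public, so plain PRF security does not protect the keys $K_i^{(j)}$ in this setting. I would also use that an honest-but-curious Receiver follows the protocol, so a reduction that runs $\AAA$ on its random tape knows the session key $k$ of every transfer, including the challenge one.

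The first step is to build a reduction $\mathcal B$ that receives a modulus $n$ with $p\equiv q\equiv 1 \pmod 4$, whose factorisation it does not know, and simulates the Sender. For each transfer (the Query$_1$/Query$_2$ phases and the Challenge) $\mathcal B$ reads $k$ and $b$ from $\AAA$'s tape and samples $s$ itself. It sets $d_b^{(\hat j)} = H(k)$ and $K_b^{(\hat j)} = F_s(k)$ consistently with the oracles. For the three slots $(i,j)\neq(b,\hat j)$ it returns fresh uniform digests and keys, placed in a randomly chosen order among the unknown slots, and remembers that these entries are \emph{owed} to the unknown roots of $\pm r$. It answers oracle queries lazily. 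Whenever $\AAA$ queries $H(x)$ or $F_s(x)$, $\mathcal B$ tests whether $x^2\equiv \pm r \pmod n$ and $x\notin\{\pm k\}$; if so it records $x$ and halts.

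Next I would argue that, as long as no such $x$ is queried, the simulation is perfectly distributed as the real game in the random-oracle model. The three non-decryptable keys are then independent uniform strings in both worlds, so both the real value of each $K_i^{(j)}$ and its replacement by a random value when $\beta_i^{(j)}=0$ are uniform. The guess $\bar\beta_i^{(j)}$ is therefore independent of $\beta_i^{(j)}$. Multi-transfer and adaptive queries add nothing, because each transfer uses a fresh $s$ and thus independent oracle points. A standard difference lemma then gives
\[
  \Adv^{\mathsf{message}}_{\mIOT}(\AAA) \le \Pr[\mathsf{Bad}],
\]
where $\mathsf{Bad}$ is the event that $\AAA$ queries a root of $\pm r$ other than $\pm k$.

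It remains to show that $\Pr[\mathsf{Bad}]$ is negligible, in two cases. If the queried $x$ is the conjugate root of $r$, i.e.\ $x^2\equiv k^2$ with $x\neq\pm k$, then $\gcd(n, k-x)$ is a non-trivial factor of $n$, as recalled from \cite{Rabin1981} in Section~\ref{ssec:receiver-security}. If instead $x^2\equiv -k^2$, then $x k^{-1}$ is a square root of $-1$ modulo $n$.

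This second case is the main obstacle. A single square root of $-1$ does not by itself factor $n$; one needs two square roots of $-1$ that are not negatives of each other. My first attempt would be to rerun $\AAA$ with independent randomness. By symmetry the roots of $-r$ are $\pm Ik$ and $\pm I\hat k$, so $\AAA$ cannot tell which of the two root classes it is producing. With probability about $\sfrac12$ two independent extractions then yield $I_1\neq \pm I_2$, and $\gcd(n, I_1-I_2)$ factors $n$. Since $\AAA$ fixes $k$ itself, this symmetry has to be justified carefully, for instance by re-randomising the challenge through $\mathcal B$ choosing $k$ on $\AAA$'s behalf, which is legitimate for an honest-but-curious Receiver. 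If this step resists, I would fall back on stating the result under the assumption, already invoked in Section~\ref{ssec:receiver-security}, that computing $I=\sqrt{-1}\bmod n$ is hard. Either way $\Pr[\mathsf{Bad}]$ is negligible, which proves the theorem.
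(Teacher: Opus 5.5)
\textbf{The gap: the case $x^2\equiv -k^2 \pmod n$.} Your reduction breaks down exactly here, and the rerun argument you propose for it cannot work. Rabin's symmetry argument succeeds because the reduction samples its known root itself and hides it, so the adversary's output is independent of which root the reduction holds. Here $\mathcal B$ cannot sample any root of $-k^2$, since that is the problem it is trying to solve. Moreover, \(\AAA\) knows $k$, so which of the four square roots of $-1$ the value $xk^{-1}$ equals is entirely \(\AAA\)'s choice.

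Concretely, take an adversary that computes some fixed square root $I_0$ of $-1$ deterministically from $n$. It queries $F_{s^\star}$ at the positive representative of $I_0k$, locates the corresponding slot through the digests $d_{1-b}^{(j)}$, and wins with advantage close to $1/2$ (in the protocol it simply decrypts $M_{1-b}$). Every extraction then returns $xk^{-1}=\pm I_0$, including after you re-randomise $k$, because $xk^{-1}$ does not depend on $k$. Hence $\gcd(n,I_1-I_2)$ is always trivial.

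No reduction from factoring to computing a single square root of $-1$ modulo $n$ is known (equivalently, one representation of $n$ as a sum of two squares). So with factoring as the only number-theoretic hypothesis, the theorem is not proved. Your ``fallback'' is in fact a necessary additional assumption and must appear in the statement. The paper's proof has the same hole, in its claim that computing any root $k_{i^\star}^{(j^\star)}\neq k$ of $\pm r$ ``is equivalent to factoring''. This holds only for the conjugate $\hat k$, and Section~\ref{ssec:receiver-security} tacitly relies on the hardness of computing $I$.

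\textbf{The rest of your route.} Outside this step, your approach differs from the paper's and is more careful. The paper invokes PRF security of $F_{s^\star}$ although $s^\star$ is public. It also passes from ``distinguishing $K_{i^\star}^{(j^\star)}$ from random'' to ``computing $k_{i^\star}^{(j^\star)}$'' without justification. Your random-oracle extraction handles both: if no query hits an unknown root, the keys are perfectly uniform. The price is idealising $F$ and $H$ beyond the ``secure KDF and hash'' of the statement.

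One further detail: your perfect simulation requires the Sender to assign the two positive roots of each of $\pm r$ to $j\in\{0,1\}$ in uniformly random order. With a canonical order (numerical, or fixed by the CRT sign pattern), the index $\hat j$ is a function of $\hat k$ that $\mathcal B$ cannot compute. This randomisation therefore has to be made part of the protocol.
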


\begin{proof}
    We first observe that the adversary's access to the two query phases does not increase its distinguishing power on the challenge execution.
    Each query execution is performed under a freshly sampled and independent KDF key $s$, and the Receiver learns at most one square root per execution.
    Under the hardness of factoring $n$, knowledge of a square root in one session does not enable computing any other square roots of $\pm r$ in that session, nor does it provide information about roots arising in independent sessions.

    Moreover, by the security of the keyed KDF $F_s(\cdot)$, for any query execution the distribution of ciphertexts corresponding to unknown roots is computationally indistinguishable from encryptions under random keys.
    Therefore, the entire view of $\AAA$ in both query phases can be simulated without knowledge of any additional square roots, and without affecting $\AAA$'s advantage in the challenge phase.

    Then, assume for contradiction that there exists a PPT adversary \(\AAA\) and a non-negligible function $\varepsilon(\kappa)$ such that
    \[
        \Adv^{\mathsf{message}}_{\mIOT}(\AAA) \ge \varepsilon(\kappa).
    \]

    Then there exists (at least) an index $(i^\star,j^\star) \neq (b,\hat j)$ such that \(\AAA\) distinguishes whether the ciphertext is real or random with non-negligible advantage $\varepsilon(\kappa)$.
    Under the security of the keyed KDF $F_{s^\star}(\cdot)$, the output $F_{s^\star}(x)$ is computationally indistinguishable from random for any unknown input $x$.
    Therefore, distinguishing a real key
    \[
        K_{i^\star}^{(j^\star)} = F_{s^\star}(k_{i^\star}^{(j^\star)})
    \]
    from random with non-negligible advantage implies recovering information about $k_{i^\star}^{(j^\star)}$.

    However, for any $(i^\star,j^\star) \neq (b,\hat j)$, the value $k_{i^\star}^{(j^\star)}$ is a square root of $\pm r$ distinct from the root $k$ held by the Receiver.
    Computing such a root from $r$ without knowing the factorisation of $n$ is equivalent to factoring $n = pq$, which is assumed to be computationally infeasible.

    Additionally, the digests $d_i^{(j)} = H(k_i^{(j)})$ provided to \(\AAA\) serve solely as identifiers and do not allow enforcing correlation tests under the security assumptions of the hash function:
    by the preimage-resistance property of $H$, they do not allow the Receiver to derive any unknown square roots \(k_i^{(j)}\) or correlate them with other roots, and therefore leak no information beyond what is already implied by $r$ (\ie check whether a candidate $c$ is a root, provided $\AAA$ can always verify it by evaluating $c^2=_{(?)}r$).

    Consequently, \(\AAA\) cannot distinguish $F_{s^\star}(k_{i^\star}^{(j^\star)})$ from a uniformly random value without either breaking the security of the KDF or factoring $n$, both of which contradict our assumptions.

    It follows that, for all $(i,j) \neq (b,\hat j)$,
    \[
        \left|
        \Pr\!\left[
            \bar\beta_i^{(j)} = \beta_i^{(j)}
        \right]
        - \frac{1}{2}
        \right|
        \le \negl(\kappa),
    \]
    and hence
    \[
        \Adv^{\mathsf{message}}_{\mIOT}(\AAA) \le \negl(\kappa).
    \]
    Since the above argument holds independently of the number and adaptivity of the query executions, it applies equally in the presence of both pre- and post-challenge queries.
\end{proof}

\vfill
\end{document}